\def\version{USENIX}
\newif \ifsubmission \submissionfalse
\newif \iffull 
\newif \ifACM
\newif \ifUSENIX
\newif \ifIEEE
\newif \ifLNCS
\newif \ifCCS
\newif \ifSP
\newif \ifNDSS
\newif \ifCrypto
\newif \ifFC
\def\fullstring{full}
\def\ACMstring{ACM}
\def\USENIXstring{USENIX}
\def\IEEEstring{IEEE}
\def\LNCSstring{LNCS}
\def\CCSstring{CCS}
\def\SPstring{SP}
\def\NDSSstring{NDSS}
\def\Cryptostring{CRYPTO}
\def\FCstring{FC}
\iffull \documentclass[11pt]{article}
\date{}
\renewcommand\footnotetextcopyrightpermission[1]{} \fi
\ifsubmission \author{} \institute{} \fi
\iffull \bibliography{references} \fi
\newif \ifcomments \commentsfalse
\newif \ifanon \anonfalse
\else \usepackage[table]{xcolor} \fi
\setlist[itemize]{itemsep=0.2em, parsep=0pt, topsep=0.3em}
\titleformat{\paragraph}[runin]{\normalfont\bfseries}{\theparagraph}{0.75em}{}
\titlespacing*{\paragraph}{0pt}{1ex}{0.75em} 
\else\usepackage{amsthm}\fi
\else\usepackage{amssymb}\fi
\newcommand{\yes}{\textsf{yes}}
\newcommand{\no}{\textsf{no}}
\definecolor{ForestGreen}{RGB}{34,139,34}
\newcommand{\talprtb}{\ensuremath{d}\xspace}
    \newcommand{\mahimna}[1]{\textsf{\small{\color{violet!80}{[Mahimna: {#1}]}}}}
    \newcommand{\kushal}[1]{\textsf{\small{\color{blue}{[Kushal: {#1}]}}}}
    \newcommand{\james}[1]{\textsf{\small{\color{green!75!black}{[James: {#1}]}}}}
    \newcommand{\ari}[1]{\textsf{\small{\color{red}{[Ari: {#1}]}}}}
    \newcommand{\jay}[1]{\textsf{\small{\color{orange}{[Jay: {#1}]}}}}
    \newcommand{\sarah}[1]{\textsf{\small{\color{red}{[Sarah: {#1}]}}}}
    \newcommand{\andres}[1]{\textsf{\small{\color{blue}{[Andres: {#1}]}}}}
    \newcommand{\dani}[1]{\textsf{\small{\color{purple}{[Dani: {#1}]}}}}
    \newcommand{\sam}[1]{\textsf{\small{\color{yellow!75!black}{[Sam: {#1}]}}}}
    \newcommand{\paddy}[1]{\textsf{\small{\color{pink}{[Paddy: {#1}]}}}}
    \newcommand{\kushal}[1]{}
    \newcommand{\mahimna}[1]{}
    \newcommand{\james}[1]{}
    \newcommand{\ari}[1]{}
    \newcommand{\jay}[1]{}
    \newcommand{\sarah}[1]{}
    \newcommand{\andres}[1]{}
    \newcommand{\dani}[1]{}
    \newcommand{\sam}[1]{}
    \newcommand{\paddy}[1]{}
\newcommand{\disclosure}{\mathsf{tally}}
\newcommand{\bm}{\alpha}
\DeclareSymbolFont{CMletters}{OML}{cmm}{m}{it}
\DeclareMathSymbol{\cmepsilon}{\mathord}{CMletters}{"0F}
\newtheorem{example}{Example}
\newtcolorbox{boxA}{
    fontupper = \bf,
    boxrule = 1.5pt,
    colframe = black 
}
\iffull \newtheorem{theorem}{Theorem}[section]
\else \newtheorem{theorem}{Theorem} \fi
\newtheorem{corollary}{Corollary}[theorem]
\newtheorem{lemma}[theorem]{Lemma}
\newtheorem{definition}[theorem]{Definition}
\theoremstyle{remark}
\definecolor{keyFindingColor}{HTML}{4A90E2}   
\definecolor{keyFindingBackground}{HTML}{e9edf5} 
\newlist{keyfindings}{itemize}{1}
\setlist[keyfindings,1]{%
  label=\(\triangleright\), 
  leftmargin=1.4em,
  itemsep=2pt plus 1pt minus 1pt,
  topsep=4pt plus 1pt minus 1pt
}
\newtcolorbox{keyFinding}[1][]{%
  enhanced, breakable,
  colback=keyFindingBackground,
  colframe=keyFindingColor,
  coltitle=black,
  boxrule=0pt, frame hidden,
  borderline west={3pt}{0pt}{keyFindingColor},
  sharp corners,
  left=10pt,right=10pt,top=8pt,bottom=10pt,
  before skip=8pt plus 2pt minus 2pt,
  after  skip=10pt plus 2pt minus 2pt,
  fonttitle=\bfseries,
  title=Key Finding, 
  attach boxed title to top left={yshift=-2mm, xshift=8pt},
  boxed title style={colback=white,colframe=keyFindingColor,boxrule=0.5pt,sharp corners,
                     left=8pt,right=8pt,top=2pt,bottom=2pt},
  before upper=\vspace{4pt},
  #1 
}
\begin{document}

\title{B-Privacy: Defining and Enforcing Privacy in Weighted Voting}

\ifSP
\IEEEoverridecommandlockouts
\makeatletter
\newcommand{\linebreakand}{%
  \end{@IEEEauthorhalign}
  \hfill\mbox{}\par
  \mbox{}\hfill\begin{@IEEEauthorhalign}
}
\makeatother
\fi
\ifanon\else
\author{
{\rm Samuel Breckenridge$^*$}\\
Cornell Tech, IC3
\and
{\rm Dani Vilardell$^*$}\\
Cornell Tech, IC3
\and
{\rm Andr\'{e}s F\'{a}brega}\\
Cornell Tech, IC3
\and
{\rm Amy Zhao}\\
Ava Labs, IC3
\and
{\rm Patrick McCorry}\\
Arbitrum Foundation
\and
{\rm Rafael Solari}\\
Tally
\and
{\rm Ari Juels}\\
Cornell Tech, IC3
}
\fi

\ifACM \renewenvironment{abstract}
  {\par\noindent\textit{\textbf{Abstract—}}\ignorespaces}
  {\par}
  





\begin{abstract}
\bfseries
In traditional, one-vote-per-person voting systems, \textit{privacy} equates with ballot secrecy: voting tallies are published, but individual voters' choices are concealed. 

Voting systems that weight votes in proportion to token holdings, though, are now prevalent in cryptocurrency and web3 systems. We show that these \textit{weighted-voting} systems overturn existing notions of voter privacy. Our experiments demonstrate that even with secret ballots, publishing raw tallies often reveals voters' choices.

Weighted voting thus requires \textit{a new framework for privacy}. We introduce a notion called \textit{B-privacy} whose basis is \textit{bribery}, a key problem in voting systems today. B-privacy captures the \textit{economic cost} to an adversary of bribing voters based on revealed voting tallies.

We propose a mechanism to boost B-privacy by noising voting tallies. We prove bounds on its tradeoff between B-privacy and transparency, meaning reported-tally accuracy. Analyzing 3,582 proposals across 30 Decentralized Autonomous Organizations (DAOs), we find that \textit{the prevalence of large voters (``whales'') limits the effectiveness of any B-Privacy-enhancing technique}. However, our mechanism proves to be effective in cases without extreme voting weight concentration: among proposals requiring coalitions of $\geq5$ voters to flip outcomes, our mechanism raises B-privacy by a geometric mean factor of $4.1\times$.

Our work offers the first principled guidance on transparency-privacy tradeoffs in weighted-voting systems, complementing existing approaches that focus on ballot secrecy and revealing fundamental constraints that voting weight concentration imposes on privacy mechanisms.
\end{abstract}

 \maketitle 
\else \maketitle  \fi

\def\thefootnote{*}\footnotetext{These authors contributed equally to this work.}
\pagestyle{plain}


\section{Introduction}

In traditional one-person-one-vote systems, privacy equates with ballot secrecy: aggregate tallies are published but individual choices remain hidden~\cite{aidt2017open, rhim2012ballotssecretfutilitysocial, gerber2013perceptions, seidmann_theory_2011}.

In this work, we show that \textit{weighted voting} fundamentally alters this picture of privacy in voting systems.
Long used for share-based voting in corporate governance \cite{lee2024shareholder}, 
weighted voting has also become the predominant tool for governance in blockchain protocols and DAOs~\cite{buterin2014daos} as well as for delegated voting in proof-of-stake consensus \cite{king2012ppcoin}.

Unlike traditional voting systems that assign equal weight to each vote, weighted systems allocate influence proportionally to token (or share) ownership. 
We show that this proportionality creates new privacy risks: even when individual ballots are hidden, the tallies themselves often reveal how participants voted. A simple example illustrates the problem. 
\begin{example}[Tally leakage]\label{prec_attack_ex}
   Alice possesses $1.2$ voting weight, while all other voters possess exactly $1$ voting weight. In a weighted tally---e.g., weight 1507 voting $\textnormal{\textsf{yes}}$ vs. weight 2510.2 voting $\textnormal{\textsf{no}}$---if one choice has a $.2$ fractional part, \textbf{it must correspond to Alice's choice}.
\end{example}

This is a toy example. In this work, however, we conduct experiments on 3,844 recorded proposal votes across 31 DAOs. (DAOs today lack persistent ballot secrecy, but are advancing toward it~\cite{aztec2023nounsdao,shutter2023shieldedvoting}.) We show that, even with hypothetical ballot secrecy, weighted tallies in the DAOs under study \textit{reveal a significant fraction of individual voters' choices}. 

\paragraph{Addressing weighted-voting privacy.} Weighted-voting systems thus require a \textit{new privacy framework}, which we introduce in this work.

Strong privacy is achievable in the weighted setting by suppressing tally details and publishing only the winner. Such redaction, however, \textit{would sacrifice transparency}, omitting critical statistics---such as the margin of victory and the rate of voter participation---that are non-negotiable for achieving trust in governance. 

Two key questions thus arise: 

\begin{enumerate}
    \item[Q1.] \textbf{How can we effectively \textit{measure} the privacy associated with published tallies in weighted voting?}
    \item[Q2.] \textbf{How can we \textit{enforce} privacy while preserving transparency for published tallies in weighted voting?}
\end{enumerate}

Answering these questions requires a privacy framework tailored to weighted voting, since existing notions like ballot secrecy were designed for one-person-one-vote systems and fail to capture the risks posed by tallies themselves.

Key among these risks today is the rampant, growing problem of \textit{bribery} / \textit{vote-buying}. For example, vote-buying constitutes a \$250+ million market in the Curve protocol~\cite{lloyd2023emergent}, while the LobbyFi vote-buying protocol commands 8–14\% of votes on major proposals in the popular Arbitrum L2 blockchain.


Tally privacy in the weighted-voting setting connects directly to bribery. For bribery strategy to be effective, an adversary must condition payouts on voter choices in a way that rewards compliance. To do so, the adversary must be able to deduce---or at least accurately estimate---how voters cast their votes. Adversarial exploitation of information in published tallies thus makes bribery strategies enforceable.


This observation motivates the new privacy notion we introduce in this work: \textbf{B-privacy} (short for ``Bribery-privacy''). B-privacy measures the \textit{economic cost} to an adversary of bribing voters given the information revealed in a published weighted-vote tally. B-privacy is complementary to standard coercion-resistance: while coercion-resistance prevents voters from cooperating with adversaries to prove how their ballot was cast, B-privacy addresses what the adversary can infer with tally access alone. Together, these properties provide comprehensive protection against bribery attacks. B-privacy thus addresses Q1 above with a concrete, economically grounded measure of privacy-related risk.

B-privacy also offers a foundation for broader reasoning about privacy. We prove in this work that at the level of individual voters, susceptibility to bribery relates to the common privacy concept of \textit{plausible deniability}---whether or not a voter's choice can be deduced from published vote data.

\paragraph{B-Privacy.} Informally, the B-privacy of a system is the \textit{minimum bribe} an adversary must pay to voters to achieve a desired outcome with a certain probability $p$ (e.g., to ensure a \textsf{yes} outcome in a \textsf{yes}/\textsf{no} vote). 

B-privacy is measured for a particular \textit{tally algorithm}, which specifies how tallies are published. For example, a tally algorithm might publish individual cleartext ballots (resulting in minimal B-privacy, as an adversary can pay out perfectly targeted bribes). Or it might publish only the winner (maximizing B-privacy, but eroding transparency). 

We define B-privacy in terms of a \emph{bribery game}, a Bayesian game between the adversary and a group of rational voters. In this game, the adversary specifies bribe amounts and conditions of payment based on the published tally. (E.g., bribes might be paid if the \textsf{yes} / \textsf{no} winning margin exceeds 10\%.) Voters then vote in a way that maximizes their expected utility, which combines their private utilities for vote outcomes and the potential bribe. B-privacy is defined as the minimum cost for the adversary to achieve its desired outcome in this game with a given probability $p$ at \emph{Bayesian Nash equilibrium}.

B-privacy is grounded in strategic behavior and cost, rather than idealized notions of secrecy, and so offers a practical lens on weighted-voting privacy. While no system can eliminate bribery, we introduce a tally algorithm that can boost B-privacy while retaining strong transparency.

\paragraph{Enforcing B-privacy via noising.} We introduce a simple tally algorithm that \textit{adds (Laplacian) noise} to a published tally. In answer to Q2 above, we show that this algorithm boosts B-privacy---i.e., yields a higher cost of bribery---while minimally perturbing tallies. 

Our approach recalls techniques for enforcing differential privacy. A subtle but critical issue, however, is that adding noise can \textit{flip a proposal's reported outcome}, implying an erroneous outcome. For example, in a 49.9\% \textsf{yes} vs. 50.1\% \textsf{no} vote, adding 0.2\% noise in the \textsf{yes} direction would flip the reported outcome from \textsf{no} to \textsf{yes}.

Our tally algorithm thus also \textit{corrects} the published outcome if necessary. This requirement, however, causes classical differential privacy bounds to break down. We therefore introduce \textit{new proof techniques} to obtain bounds on the B-privacy of our tally algorithm---one of our key contributions.

\paragraph{Limitation:} In this initial exploration of B-privacy, we restrict our focus to \textit{binary} voting choices, notionally \textsf{yes}/\textsf{no}. We do not consider multi-choice ballots or the impact of abstention (which may be treated as a ballot option). Extension to multiple choices renders analysis much more complex. We conjecture that our results still hold directionally for the multi-choice case, a question that we leave for future work. 

\subsection*{Contributions}
Ours is the first framework that quantifies bribery risk economically and provides a tunable mechanism that balances privacy and transparency in practice for weighted voting. We review related work in~\Cref{sec:related} and give preliminary formalism in~\Cref{sec:prelims}. Our contributions are then as follows: 

\begin{itemize}
    \item \textbf{Initiating study of weighted-voting privacy:} We introduce the problem of tally privacy for weighted-voting. We demonstrate the urgency of the problem with experimental, practical attacks on real-world DAO voting that reveal voter choices despite ballot secrecy (\Cref{sec:pd_in_daos}). 
    \item \textbf{B-privacy:} We introduce and formally define B-privacy, an economic measure of bribery resistance that is the first privacy measure specific to weighted-voting systems (\Cref{sec:b-privacy}). We prove basic results on B-privacy and explore methods for computing it in practice (\Cref{sec:B-privacy_comp}).
    \item \textbf{Noise-based privacy mechanism:} We present a simple, tunable noising mechanism that preserves winner correctness. We prove bounds on its balance between B-privacy and transparency (\Cref{sec:noise_mechanism}). 
    \item \textbf{Empirical analysis:} We analyze the effect of our noising mechanism across 3,582 proposals in 30 DAOs, revealing that extreme voting weight concentration fundamentally limits B-privacy improvements in most real-world cases. However, among proposals without whale dominance (requiring coalitions of $\geq5$ voters to flip the outcome), our mechanism improves B-Privacy by a geometric mean factor of $4.1\times$ over raw tallies, with minimal transparency degradation (\Cref{sec:noise_experiments}).

\end{itemize}

We conclude in~\Cref{sec:conclusion}. We relegate to the paper appendices theorems and proofs (\Cref{app:optimal_strategy}) and details on computational methods for B-privacy (\Cref{app:comp-methods}).

\section{Related Work}
\label{sec:related}

\paragraph{Voting privacy fundamentals.} 
Traditional privacy concepts for equal-weight voting include ballot secrecy, receipt-freeness, and coercion resistance, in order of increasing strength \cite{ali2016overview}. 
Ballot secrecy, a common protection in in-person voting, ensures that individual votes remain hidden from observers. Early cryptographic methods achieved this property~\cite{benaloh1987verifiable,chaum1988elections}. Receipt-freeness prevents voters from proving their choices to others after the fact~\cite{benaloh1994receipt, sako1995receipt, hirt2000efficient}. Coercion resistance achieves a similar property, but in a stronger model of pre-voting interaction with adversaries~\cite{juels2005coercion, grontas2018towards, lueks2020voteagain, achenbach2015improved, dimitriou2020efficient}. These notions and schemes do not immediately generalize to the weighted-voting setting, which thus requires new approaches.

\paragraph{Weighted voting privacy.} 
Eliasson and Z\'{u}chete~\cite{eliasson2006electronic} and Dreier et al.~\cite{dreier2012defining} design secret-ballot schemes for weighted voting, while Cicada~\cite{glaeser2023cicada} does so specifically for blockchain governance, but all of these works disregard tally leakage of ballot contents. Kite~\cite{nazirkhanova2025kite} focuses on private delegation of voting weight. That work acknowledges that tallies may leak private information, briefly mentioning the idea of publishing limited-precision tallies. 

Snapshot's ``shielded voting'' has broad real-world use in blockchain governance. It encrypts ballots, but only ephemerally, decrypting them once proposals conclude~\cite{shieldedvotingsnapshot}. A proposed update offers persistent ballot secrecy~\cite{shutter2023shieldedvoting} via cryptographic techniques and/or trusted execution environments.

\paragraph{Probabilistic tallying methods.} 
Several approaches have explored probabilistic methods for vote tallying to address risks of tally leakage. DPWeVote applies differential privacy to weighted voting in semi-honest cloud environments using randomized response~\cite{Yan07022019}. Random sample voting selects and tallies only subsets of ballots for efficiency~\cite{chaumrsv2016}, with risk-limiting tallies adapting this idea to counter coercion threats~\cite{jamroga2019risk}. Such probabilistic approaches are unacceptable in many voting settings, as small electorates and tight margins may elevate the probability of an incorrect result being reported. Liu et al. analyze privacy under deterministic voting rules using distributional differential privacy, but focus primarily on the unweighted setting and without metrics for attack resistance~\cite{ao2020private}.

\paragraph{Economic modeling of voting behavior.} 
Game-theoretic models are an established approach for analyzing strategic voting behavior. Seidmann shows that when voters may receive external rewards, private voting leads to better organizational outcomes than public voting~\cite{seidmann_theory_2011}, while Bayesian models have been used to examine coordination effects~\cite{fey1997stability}, jury decision-making~\cite{duggan2001bayesian}, shareholder voting~\cite{lee2024shareholder}, and weighted average voting~\cite{renault2007bayesian}. Such games have also been used to examine the robustness of quadratic voting to strategic manipulation, including collusion and fraud~\cite{weyl2017robustness}. Similarly, our B-privacy framework examines the robustness of different tally release mechanisms to bribery in weighted voting systems.

\paragraph{Decentralized autonomous organizations.} 
DAOs represent an important setting for weighted voting and a rich source of real-world voting data. DAOs today provide either no or ephemeral ballot secrecy~\cite{shieldedvotingsnapshot}, but are starting to embrace ballot secrecy in part due to bribery concerns~\cite{aztec2023nounsdao,shutter2023shieldedvoting,vitalikcoinbased}. Empirical studies demonstrate that public visibility creates peer pressure and herding dynamics~\cite{sharma2024unpacking,feichtinger2024sok}, theoretically reducing decentralization~\cite{fabregavoting}. The research community has identified secure voting---specifically, privacy and coercion-resistance---as a critical open problem for DAOs~\cite{tan2023open}, especially in the face of emerging threats like DAOs created for vote-buying, known as Dark DAOs~\cite{mougayar2018darkdao,pereira2024darkdaos}. While these challenges are well-documented, existing work lacks functional metrics to assess how they translate to bribery vulnerabilities. Our B-privacy framework addresses this gap with a quantifiable measure of bribery resistance, and our noise-based mechanism offers a practical approach to enhancing B-privacy while preserving tally transparency.
\section{Voting Framework}\label{sec:prelims}

To study privacy in weighted voting, we first present a general underlying framework used to represent elections and their outcomes. We model a standard voting data format that allows results to be reported in an arbitrary form, so that our definitions generalize to different tallying approaches.

\begin{definition}[Voting transcript]
A \emph{voting transcript} $t = (\mathbf{w},\mathbf{c})$ records the results of a proposal with $n$ voters choosing from a set $C$ of options, where:
\begin{itemize}
\item $\mathbf{w} = (w_1, \ldots, w_n)$ where $w_i \in \mathbb{R}^+$ is the voting weight of voter $i$ and
\item $\mathbf{c} = (c_1, \ldots, c_n)$ where $c_i \in C$ is the option chosen by voter $i$.
\end{itemize}
We write $|\mathbf{s}|$ for the $\ell_1$-norm of sequence $\mathbf{s}$ and for convenience denote the total weight as $W = |\mathbf{w}|$.
\end{definition}

This definition captures scenarios where voters cast all their voting weight for a single option. We do not allow vote splitting in our model.

\begin{definition}[Tally algorithm]
A \emph{tally algorithm} is a (possibly randomized) algorithm $\disclosure$ applied to a voting transcript $t$, where $\disclosure \colon T \to O$ maps transcripts from the space of possible transcripts $T$ to outcomes in outcome space $O$.
\end{definition}

The outcome space $O$ can take various forms depending on the tally algorithm—it might consist of aggregated voting weight for each choice, binary winners, full transcripts, or any other encoding of voting results. To illustrate this definition with a concrete example, consider the raw tally algorithm $\disclosure_\textsf{raw}$, corresponding to the first case, in which transcripts are mapped to aggregated voting weight for each choice.

\begin{example}[Raw tally]\label{example_tally}
Let \( \mathbf{w} = (3.5, 2, 1) \) be voter weights and \( \mathbf{c} = (\textsf{yes}, \textsf{no}, \textsf{yes})\) corresponding choices. Then the raw tally for $t = (\mathbf{w},\mathbf{c})$ is:
\[
\disclosure_\mathsf{raw}(t) = 
\left(
\sum_{i: c_i = \textsf{yes}} w_i,\,
\sum_{i: c_i = \textsf{no}} w_i
\right)
=
(4.5, 2).
\]
That is, the \textsf{yes} option receives total weight 4.5, and \textsf{no} receives total weight 2.
\end{example}

Different tally algorithms offer different privacy-transparency tradeoffs. We summarize the key algorithms used throughout this paper in Table~\ref{tab:result-functions}. Of particular interest is the noised tally algorithm, which we later show can significantly improve privacy while maintaining strong transparency guarantees.

\begin{table}[t]
\centering
\small
\begin{tabular}{@{}ll@{}}
\toprule
\textbf{Algorithm Name} & \textbf{Specification} \\
\midrule
Winner-only &
$\disclosure_\mathsf{winner}(t) = \arg\max_{j \in C} \sum_{i: c_i = j} w_i$ \\[0.8em]

Raw tally &
$\disclosure_\mathsf{raw}(t) = \bigl(\sum_{i: c_i = j} w_i\bigr)_{j \in C}$ \\[0.8em]

Noised tally &
$\disclosure_{\mathsf{noised}(\nu)}(t) = \left(Y_j + \sum_{i: c_i = j} w_i\right)_{j \in C}, \; Y_j \stackrel{\$}{\leftarrow} \nu$ \\[0.8em]

Full-disclosure &
$\disclosure_\mathsf{public}(t) = t$ \\
\bottomrule
\end{tabular}
\caption{Tally algorithms. Each algorithm maps transcript \(t\) to an outcome: 
winner-only returns only the winning choice; raw tally reports the aggregate weight per 
choice; noised tally adds noise \(Y_j\sim\nu\) to each 
choice's tally; full-disclosure returns the complete voting transcript \(t\).}
\label{tab:result-functions}
\end{table}

The raw tally algorithm is the most natural and direct extension of one‑person‑one‑vote privacy to weighted systems. Although~\Cref{prec_attack_ex} demonstrated that privacy can be broken in toy settings, one might hope that larger electorates or real‑world proposals would obscure individual choices. We now show, however, that this is not the case.

\section{Practical Attacks on Raw Tallies}\label{sec:pd_in_daos}

We introduce two attack strategies for extracting individual votes from weighted-voting tallies. We call them a \textit{whale attack} and a \textit{subset sum attack}. We combine these two attacks into a \textit{unified attack algorithm} that efficiently extracts individual ballots given only raw tallies and voter weights.

We explore the efficiency of our unified attack algorithm on data from DAOs. The fact that weighted voting is prevalent in DAOs and individual voting records are publicly available allows us to simulate hypothetical ballot-secrecy scenarios and validate our attacks against ground truth. \footnote{While some systems, such as Snapshot, offer shielded voting as an option, these mechanisms conceal ballots only ephemerally and reveal (anonymous) address voting choices when a proposal vote has concluded.} 

\begin{keyFinding}[title={Key Finding}]
\begin{enumerate}[label=\textbf{(A\arabic*)}]
    \item \textbf{Even under hypothetical ballot secrecy, use of the raw tally algorithm results in significant loss of ballot privacy for DAO voters.}
\end{enumerate}
\end{keyFinding}

\noindent Thus weighted-voting systems cannot directly apply techniques from traditional voting and achieve equivalent privacy.

\subsection{Attack setting and methodology}


We investigate a counterfactual scenario: \textit{If a DAO enforced ballot secrecy but disclosed exact tallies (as is standard in traditional voting), what level of voter privacy would result?} We test our attack algorithm against this scenario by simulating attacks where only aggregate tallies are public, then verifying success against known individual voting records. To do so we use a dataset of votes from the Snapshot voting platform between September 2020 and February 2025 collected for a previous large-scale study of DAO voting\cite{fabregavoting}, limiting our analysis to DAOs with more than 5 proposals, which yields 3,844 proposals across 31 DAOs. 

In our attack model, we assume the set of participating voters and their weights are known (as is typical in token-based systems), but individual vote choices are hidden. Each proposal presents voters with multiple options—typically binary \textsf{yes}/\textsf{no} decisions, although some include additional alternatives. While many proposals offer explicit abstention as a voting option, we focus our analysis on voters who actively cast ballots, treating abstention as a distinct choice only when explicitly available.

We now describe our two complementary attack strategies.

\paragraph{Whale attack.}

The term \textit{whale} denotes large token holders in cryptocurrency systems---and high-weight voters in DAOs. Our \textit{whale attack} exploits the simple but effective observation that if a whale's weight exceeds the total votes for some choice—the whale couldn't have backed that choice.

Consider $\disclosure_\mathsf{raw} = (s_1, s_2, \ldots, s_{|C|})$ where $s_j$ represents the total weight for choice $j \in C$. If $w_i > s_j$ for some voter $i$ and choice $j$, then $c_i \neq j$.

Our whale attack is an \textit{iterative} algorithm: after identifying and removing a whale's vote from one choice, we recompute the remaining tallies and apply the attack again. This process can cause tallies to flip—if enough weight is removed from the initially winning choice, a different choice may become the winner, enabling further whale identification. The algorithm continues until no more whales can be identified, often revealing a substantial fraction of the electorate by weight. It runs in $O(n \cdot |C|)$ time, making it efficient even for large electorates, and serves as an effective preprocessing step for our more computationally intensive subset sum attack.

\paragraph{Subset sum attack.}

Our \textit{subset sum} attack exploits the precision of raw tallies by searching for vote assignments that produce observed outcome. In many cases, there exists a unique transcript $(\mathbf{w}, \mathbf{c})$ that yields the raw tally $\disclosure_\mathsf{raw}(\mathbf{w}, \mathbf{c})$. Reconstructing the vote assignment $\mathbf{c} = (c_1, \ldots, c_n)$ reduces to finding, for each choice $j \in C$, the subset of voters $\{i : c_i = j\}$ whose weights sum to the total $s_j$.

This generalizes the classic subset sum problem to multiple partitions: given voter weights $\mathbf{w}$ and target values $(s_1, s_2, \ldots, s_{|C|})$ from the tally, partition the voters such that each subset's weight sum matches its corresponding target. While multiple partitions could theoretically produce identical sums (and do when voters have identical weights), the high precision of token weights in DAOs (typically 18 decimal places) makes such collisions overwhelmingly unlikely, ensuring that any discovered solution is almost certainly the correct one. The subset sum variant underlying our attack algorithm is NP-hard in general, but two factors make it tractable using well-studied algorithms: (1) the precision mentioned above and (2) the small electorates common in DAO governance.

Lagarias and Odlyzko~\cite{lagarias1985solving} propose an expected polynomial-time algorithm, but it works only for low-density instances, a requirement not satisfied by most DAO voting weight distributions. Instead, we employ the meet-in-the-middle approach of Horowitz and Sahni~\cite{10.1145/321812.321823}, which splits  voter weights into two halves, enumerates all partial sums in each half, and searches for combinations that satisfy the target constraints. This algorithm has $O^*(2^{n/2})$ time and space complexity, making it practical for $n \lesssim 45$ voters on commodity hardware—a significant improvement over the naive $O^*(2^n)$ brute-force approach.

To recover individual votes, we solve a modified problem for each voter $i$: we attempt to find a valid partitioning of $\{w_j : j \neq i\}$ such that adding $w_i$ to the subset for choice $c$ yields the target sum $s_c$. If exactly one choice $c$ allows a valid solution, then voter $i$ must have chosen $c$.

Preprocessing using the whale attack reduces the effective problem size from $n$ to some $n' \leq n$ voters and the complexity of the subset sum instance from $O^*(2^{n/2})$ to $O^*(2^{n'/2})$. This often shrinks problem instances to tractable size.

\smallskip

Our unified attack algorithm is specified as~\Cref{alg:deanon‐balanced}.

\begin{algorithm}[t]
\begin{algorithmic}[1]
  \State \textbf{Input:} Voters $\{1, \ldots, n\}$, weights $\mathbf{w} = (w_1, \ldots, w_n)$, raw tally $\disclosure_{\textsf{raw}}(t) = (s_1, \ldots, s_{|C|})$
  \State $\mathbf{d} \gets \mathbf{0}$ \Comment{Determined votes vector}
  \State $U \gets \{1, \ldots, n\}$ \Comment{Undetermined voters}
  \While{true} \Comment{Whale Attack}
    \State $j^* \gets \arg\max_{j \in C} s_j$ \Comment{Winning choice}
    \State $s_2 \gets$ 2nd‐largest value in $(s_1, \ldots, s_{|C|})$
    \State $W \gets \{i \in U \mid w_i > s_2\}$ \Comment{Whales}
    \If{$W = \varnothing$} \textbf{break} \EndIf
    \ForAll{$i \in W$}
      \State $d_i \gets j^*$ \Comment{Voter $i$ must have voted for $j^*$}
      \State $U \gets U \setminus \{i\}$
      \State $s_{j^*} \gets s_{j^*} - w_i$ \Comment{Update remaining tally}
    \EndFor
  \EndWhile
  \If{$|U| \leq 45$} \Comment{Subset Sum Attack}
    \ForAll{$i \in U$}
      \State Split $U \setminus \{i\}$ into two halves $L, R$
      \State $L_s \gets \{\sum_{k \in S} w_k \mid S \subseteq L\}$
      \State $R_s \gets \{\sum_{k \in S} w_k \mid S \subseteq R\}$
      \State $Q \gets \{j \in C \mid \exists \ell \in L_s, r \in R_s : \ell + r + w_i = s_j\}$
      \If{$|Q| = 1$}
        \State let $j^*$ be the sole element of $Q$
        \State $d_i \gets j^*$ \Comment{Voter $i$ must have voted for $j^*$}
        \State $U \gets U \setminus \{i\}$
        \State $s_{j^*} \gets s_{j^*} - w_i$ \Comment{Update remaining tally}
      \EndIf
    \EndFor
  \EndIf
  \State \Return $\mathbf{d}$
\end{algorithmic}
\caption{Unified Attack Algorithm for Extracting Ballots from a raw tally $\disclosure_\mathsf{raw}(t)$}
\label{alg:deanon‐balanced}
\end{algorithm}

\subsection{Results}\label{result:attack}

We applied our unified attack algorithm (\Cref{alg:deanon‐balanced}) to 3,844 proposals across 31 DAOs. The attack breaks plausible deniability—meaning that it definitively identifies at least one voter's choice—on 3,118 proposals (81.0\%) and recovers \textit{all} voter choices on 1,122 proposals (29.2\%). Among these 3,118 vulnerable proposals, the attack leaks on average 41.6\% of ballots and 85.1\% of voting weight per proposal.

Figure \ref{fig:attack-aggregate} shows the mean effectiveness across DAOs, revealing two key patterns:

\paragraph{Small DAOs experience near-complete privacy failures.} Among proposals with 45 or fewer voters, we achieve complete vote recovery in 791 out of 878 cases (90.1\%). The remaining 87 proposals resist attack only due to voters with identical weights casting different ballots—a scenario that creates fundamental ambiguity. The cluster of smallest DAOs in the top-right of the plot demonstrates that below a certain size threshold, privacy protections collapse entirely. This reflects the subset sum component succeeding on virtually every small proposal, often aided by whale-attack preprocessing that reduces larger proposals to manageable sizes. In 346 proposals that initially exceeded the 45-voter threshold, whale-attack preprocessing successfully reduced the residual voter count to $\leq$45, allowing the subset sum attack to be used.

\paragraph{Larger DAOs' vulnerability depends on whale concentration.} While DAOs with more voters generally appear more resistant (clustering near the y-axis), significant outliers exist. Large DAOs where the subset sum attack is inapplicable may still be highly vulnerable to whale attacks depending on their voting weight distribution. This whale impact is visible in the region near 0\% ballots leaked —few individual votes are revealed, but the most influential voters are completely exposed, with up to 80\% total voting weight leaked.

\Cref{fig:attack-specific} illustrates variations in attack success across proposals within individual DAOs. The effectiveness of both the whale and subset-sum attacks is clearly visible in the Balancer results (far right panel). For all proposals with fewer than 45 voters, the attack achieves complete success, generally leaking 100\% of voting weight. For proposals exceeding 45 voters, whale-attack preprocessing successfully reduces the problem size below the 45-voter threshold in all but two cases, enabling the subset-sum attack to proceed.

Attack success varies considerably across the other three DAOs, revealing several key patterns. Larger winning margins (shown in yellow/green) consistently lead to higher privacy compromise rates, as whales become easier to identify when one choice receives disproportionately low support. This demonstrates that privacy under raw tallies depends on both electorate size and voting patterns.

Notably, DAO size alone does not determine vulnerability. Despite having roughly 10× more voters than Aavegotchi on average, Arbitrum shows much higher attack success rates, with a greater density of proposals where >60\% of voting weight is leaked. The key difference lies in voting weight distribution: while both DAOs have similar average winning margins, Arbitrum exhibits much more concentrated voting weight. In high-margin proposals (>90\%), whale attacks leak 88.0\% of voting weight in Arbitrum versus only 38.1\% in Aavegotchi, highlighting how voting weight concentration can outweigh electorate size in determining vulnerability. These results demonstrate that raw tally vulnerability depends on a combination of electorate size, winning margins, and voting weight concentration.


\begin{figure}[t!]
    \centering
    \includegraphics[width=1\linewidth]{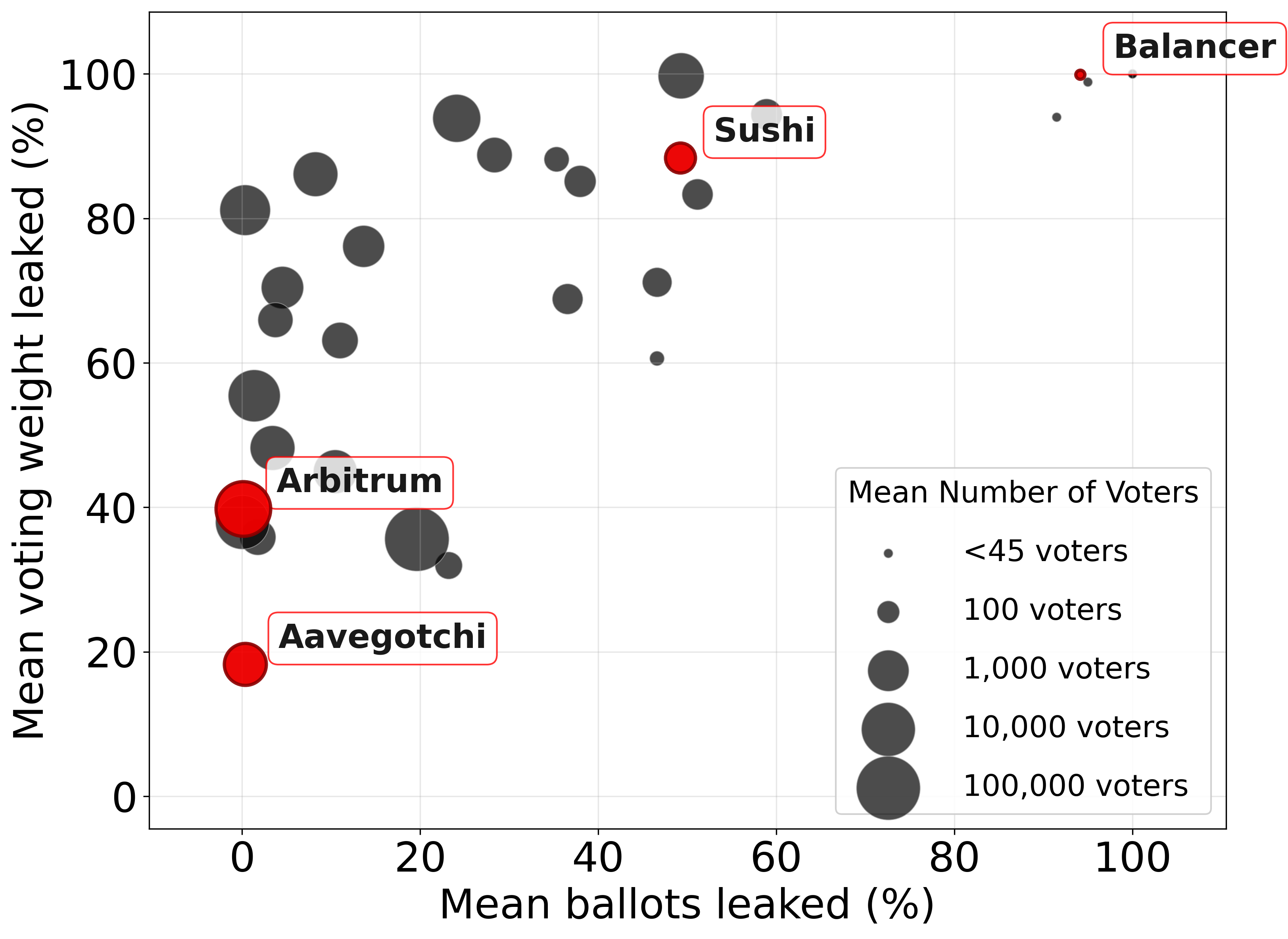}
    \caption{Unified attack algorithm results across all DAOs. Each point represents one DAO and shows mean percentage of ballots leaked (x-axis) vs. mean percentage of voting weight leaked (y-axis) across that DAO's proposals. Point size indicates average voters per proposal. The attacks succeed across diverse DAOs, demonstrating that the raw tally algorithm fails to provide adequate ballot secrecy in weighted voting systems. The attack success for DAOs highlighted in red is presented in more detail in Figure~\ref{fig:attack-specific}.}
    \label{fig:attack-aggregate}
\end{figure}

\begin{figure*}[th!]
    \centering
    \includegraphics[width=\textwidth]{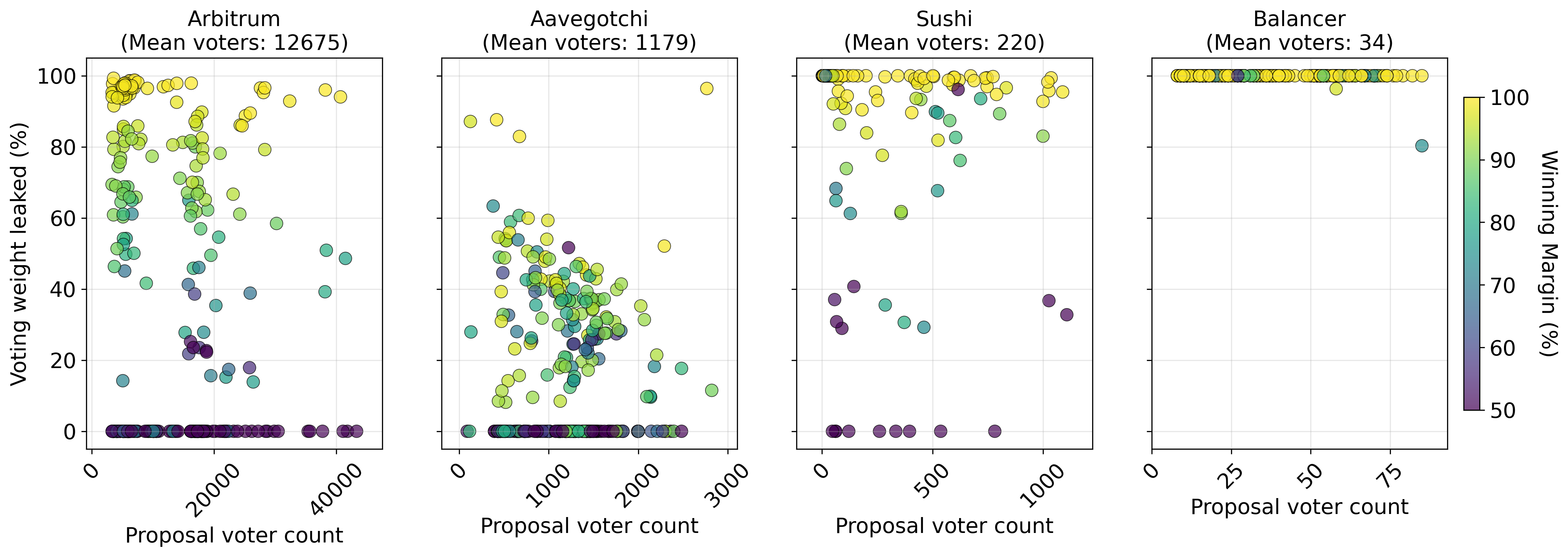}
    \caption{Variations in attack effectiveness across four DAOs with different scales. Each point represents one proposal, colored by winning margin. The view across proposals demonstrates that Attack success is driven by the interaction of electorate size, whale concentration and winning margins—large whales become easier to identify when one choice receives disproportionately low support. Balancer shows near-complete compromise due to small size, while Arbitrum's higher vulnerability compared to Aavegotchi (despite 10× more voters) demonstrates that voting weight concentration can be more important than electorate size for determining privacy risk.}
    \label{fig:attack-specific}
\end{figure*}

\section{B-Privacy: Definition}\label{sec:b-privacy}

In this section, we introduce \emph{B-Privacy}, our new privacy metric for weighted voting. We first present a game-theoretic model 
of adversarial vote‑buying that we call a \textit{bribery game} (\Cref{subsec:bribery_game}). It underpins our subsequent formal definition of B-privacy (\Cref{subsec:B-privacy_definition}). 

\subsection{Bribery game}
\label{subsec:bribery_game}

Our bribery game is a Bayesian game in which voters are rational agents who maximize their expected payoff under uncertainty about other voters' preferences. Each voter must make decisions based on incomplete information about how others will vote, while an adversary strategically offers bribes to influence the outcome.

\paragraph{Voter utilities.}
Our bribery game requires an extension of our voting framework to include voter utilities. For each option $c \in C$, let $U^{c} = (U_1^{c}, \ldots, U_n^{c})$ where $U_i^c \in \mathbb{R}$ represents voter $i$'s utility from outcome $c$. We focus on binary proposals ($C = \{\mathsf{yes}, \mathsf{no}\}$) and normalize by setting $U_i^\mathsf{yes} = 0$, so $U_i^\mathsf{no}$ represents voter $i$'s relative preference for the ``no'' option. We assume all voters participate—abstention could be modeled as an explicit third choice, although we restrict to the binary case for simplicity as it captures the essential tension between tally transparency and B-Privacy while avoiding the additional complexity of preference aggregation across multiple alternatives. We leave extension to multi-choice settings for future work.

Our model for this game must specify what information voters have about other voters' utilities and likely choices. 
We follow the standard approach in Bayesian voting games (e.g. \cite{renault2007bayesian}) and model utilities as private types drawn from commonly known distributions. Each voter $i$'s utility $U_i^\mathsf{no}$ is drawn independently from distribution $\mathbf{U}_i^\mathsf{no}$ with infinite support, where these distributions are public but the realized values are private information.

\paragraph{Coercion-resistance.} Our bribery game assumes the underlying voting system is coercion-resistant: voters cannot cooperate with an adversary to prove how a ballot was cast. Under this assumption, the published tally is the sole information available for designing bribery strategies. B-privacy therefore complements coercion-resistance: the latter prevents verification through direct interaction, while the former limits what can be inferred from the tally alone. Although B-privacy is defined in this idealized setting, it remains practically relevant more generally as it bounds the privacy loss specifically due to tally publication.

\bigskip
\noindent We present the full bribery game in~\Cref{fig:bribery_game}.

\paragraph{Adversarial strategy.} In our bribery game, the adversary commits to bribe amounts $\mathbf{b}$ prior to the proposal. They do not depend on the outcome. The payment conditions $\mathbf{f}$, in contrast, can depend on the outcome. 
This separation cleanly distinguishes the information-theoretic aspects of tally algorithm choice (captured by optimal $\mathbf{f}$) from the economic optimization of bribe amounts ($\mathbf{b}$), enabling the comparison among tally algorithms that is our main focus.



\subsection{B-privacy definition}
\label{subsec:B-privacy_definition}

Using the bribery game we can now define B-Privacy.

\begin{definition}[B-Privacy] \label{def:b-privacy}
For given inputs to the bribery game $(\mathbf{w}, \mathbf{U}^\mathsf{no})$ The \emph{B-privacy} $B_{\disclosure}(p)$ is the minimum total bribe budget required for an adversary to achieve success probability $p$ when tally algorithm $\disclosure$ is used:
$$B_{\disclosure}(p) = \min_{\mathbf{b}, \mathbf{f}} |\mathbf{b}| ~\text{s.t.} \quad p_{{\sf succ}} \geq p.$$ 

\noindent The \emph{relative} B-privacy of tally algorithm $\disclosure$ is the ratio $B_{\disclosure}(p)/B_{\disclosure_{\mathsf{public}}}(p)$.
\end{definition}

B-Privacy is measurable for any tally algorithm. Even $\disclosure_{\mathsf{winner}}$, which only discloses the winner, enables strategic bribery through outcome-conditional payments (e.g., ``payment if $\mathsf{yes}$ wins''). Looking forward, one of our contributions is to identify tally algorithms that maximize transparency while maintaining high B-Privacy. 

\section{B-Privacy: Computation}
\label{sec:B-privacy_comp}

Computing B-privacy, as specified in Definition~\ref{def:b-privacy}, requires computation of optimal adversarial bribery strategies. In this section, we introduce foundational analytic concepts for this purpose (\Cref{subsec:analytic_concepts}) and then present results and methods for computing B-privacy (\Cref{sec:adv_strategy}).

\subsection{Pivotality and bribe margin}
\label{subsec:analytic_concepts}

In analyzing and computing B-privacy, we make use of two foundational concepts: \textit{pivotality} and \textit{bribe margins}. For a given voter $i$, tally algorithm, and adversarial strategy, these concepts respectively reflect the voter's likelihood of affecting the vote outcome given the voter's individual voting choice and of receiving a bribe.

Both quantities are computed from voter $i$'s perspective and assume a Bayesian Nash equilibrium has been reached. Unless otherwise noted, probabilities are taken over the randomness in other voters' utilities $U_j^\mathsf{no} \sim \mathbf{U}_j^\mathsf{no}$ for $j \neq i$, which determine the equilibrium choices $c_j$ of other voters and the resulting voting transcript $t = (\mathbf{w}, \mathbf{c})$.

\begin{definition}[Pivotality]\label{def:pivotality}
For voter $i$, the pivotality $\Delta_i$ measures how much their vote affects the probability of a $\mathsf{no}$ outcome:
$$\Delta_i = \Pr[\mathsf{no} \text{ wins} \mid c_i = \mathsf{no}] - \Pr[\mathsf{no} \text{ wins} \mid c_i = \mathsf{yes}],$$
\end{definition}

Pivotality is a standard concept in voting games~\cite{fey1997stability, duggan2001bayesian, weyl2017robustness}. In our framework, it captures the intuition that voters with lower pivotality are more susceptible to bribes—since their vote is less likely to affect the outcome, the potential bribe becomes relatively more attractive. The bribe margin, by contrast, is specific to our bribery framework and captures how a voter's choice affects their expected bribe payment under the adversarial strategy.

\begin{definition}[Bribe margin]\label{def:bribe-margin}
For voter $i$ with bribery condition function $f_i$, the bribe margin $\bm_i$ is the additional probability of receiving a bribe when voting $\mathsf{yes}$ versus $\mathsf{no}$:
$$\bm_i = \Pr[f_i(\disclosure(t)) = 1 \mid c_i = \mathsf{yes}] - \Pr[f_i(\disclosure(t)) = 1 \mid c_i = \mathsf{no}].$$
When $\disclosure$ is non deterministic, the probability is also over this additional source of randomness.
\end{definition}

\begin{figure}[t!]
    \centering
\begin{tcolorbox}[title = Bribery Game, label=box:bribery-game]
\footnotesize

\textbf{Game Inputs}: 
\begin{itemize}
\item Voter weights $\mathbf{w} = (w_1, \ldots, w_n)$ 
\item Utility distributions $\mathbf{U}^\mathsf{no} = (\mathbf{U}_1^\mathsf{no}, \ldots, \mathbf{U}_n^\mathsf{no})$
\item Tally algorithm $\disclosure$
\end{itemize}

\textbf{Players}: $n$ voters (indexed by $i \in \{1, \ldots, n\}$) and one adversary

\smallskip

\textbf{Adversary's Objective}: The adversary seeks to maximize the probability of a $\mathsf{yes}$ outcome.

\smallskip

\textbf{Information Structure:} 
\begin{itemize}
\item Private utilities $U_i^\mathsf{no} \stackrel{\$}{\leftarrow} \mathbf{U}_i^\mathsf{no}$ are drawn for each voter $i$
\item Each voter $i$ observes only their own private utility $U_i^\mathsf{no}$
\item The distributions $\mathbf{U}_1^\mathsf{no}, \ldots, \mathbf{U}_n^\mathsf{no}$ are common knowledge among all players
\end{itemize}

\textbf{Game Sequence}:
\begin{enumerate}
\item \textit{Adversary's strategy}: Adversary commits to bribe amounts $\mathbf{b} = (b_1, \ldots, b_n)$ and bribery condition functions $\mathbf{f} = (f_1, \ldots, f_n), ~f_i: O \to \{0,1\}$ where $O$ is the set of possible outcomes, and $f_i(\disclosure(t)) = 1$ indicates voter $i$ receives bribe $b_i$ given voting transcript $t=(\mathbf{w},\mathbf{c})$

\item \textit{Voting stage}: Voters simultaneously choose votes $\mathbf{c}$ to maximize expected utility
\item \textit{Outcome}: Tally algorithm $\disclosure$ reveals outcome information and bribes are paid per condition functions
\end{enumerate}

\textbf{Equilibrium}: Bayesian Nash equilibrium where each voter's strategy maximizes their expected utility given their beliefs about others' behavior, and these beliefs are consistent with equilibrium play.

\end{tcolorbox}
    \caption{Bribery game underpinning B-privacy definition.}
    \label{fig:bribery_game}
\end{figure}

These quantities determine voter $i$'s decision as shown in the following theorem:

\begin{theorem}[Adversary's success probability]\label{thm:adv_success_prob}
Given bribe vector $\mathbf{b}$ and bribery condition functions $\mathbf{f}$, the adversary's probability of achieving a $\mathsf{yes}$ outcome is 

\[
p_{{\sf succ}} = \Pr\left[\sum_{i=1}^n w_i X_i > W/2\right],
\]

\noindent where $X_i = \mathbb{I}[U_i^{\sf no} \leq \frac{\bm_i b_i}{\Delta_i}]$ for $U_i^{\sf no} \stackrel{\$}{\leftarrow} \mathbf{U}_i^{\sf no}$ is an indicator random variable and voting behavior follows the Bayesian Nash equilibrium induced by $(\mathbf{b}, \mathbf{f})$.

\end{theorem}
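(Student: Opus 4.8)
The plan is to work from voter $i$'s perspective, assuming all other voters play their equilibrium strategies, and to characterize voter $i$'s own best response as a threshold rule on their private type $U_i^{\sf no}$. Fix voter $i$ and condition on everything else. Voter $i$'s expected utility from voting $\mathsf{yes}$ versus $\mathsf{no}$ decomposes into two pieces: the utility derived from the eventual outcome (which depends on whether $i$'s vote changes the probability of a $\mathsf{no}$ win, i.e.\ on the pivotality $\Delta_i$ from \Cref{def:pivotality}), and the expected bribe received (which depends on how $i$'s vote changes the probability that the payment condition fires, i.e.\ on the bribe margin $\bm_i$ from \Cref{def:bribe-margin}). First I would write out $\mathbb{E}[\text{util} \mid c_i = \mathsf{yes}] - \mathbb{E}[\text{util} \mid c_i = \mathsf{no}]$: recalling the normalization $U_i^{\sf yes} = 0$, the outcome-utility difference is $-\Delta_i \cdot U_i^{\sf no}$ (voting $\mathsf{yes}$ decreases the chance of the $\mathsf{no}$ outcome by $\Delta_i$, and $U_i^{\sf no}$ is the relative value of that outcome), while the expected-bribe difference is exactly $\bm_i b_i$ by definition of the bribe margin. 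Hence voter $i$ weakly prefers $\mathsf{yes}$ iff $\bm_i b_i - \Delta_i U_i^{\sf no} \geq 0$, i.e.\ iff $U_i^{\sf no} \leq \bm_i b_i / \Delta_i$ (assuming $\Delta_i > 0$; the degenerate case $\Delta_i = 0$ should be handled as a limiting/boundary case where the voter follows the bribe whenever $\bm_i b_i > 0$, which is consistent with the stated threshold going to $\pm\infty$). This is precisely $X_i = \mathbb{I}[U_i^{\sf no} \leq \bm_i b_i / \Delta_i]$ indicating that $i$ votes $\mathsf{yes}$.

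Next I would assemble the global picture: at the Bayesian Nash equilibrium induced by $(\mathbf{b}, \mathbf{f})$, each voter independently votes $\mathsf{yes}$ exactly when their threshold indicator $X_i$ equals $1$, so the total weight voting $\mathsf{yes}$ is $\sum_i w_i X_i$. The $\mathsf{yes}$ outcome occurs precisely when this exceeds half the total weight $W/2$ (using the convention for ties implicit in the paper's $\disclosure_{\mathsf{winner}}$). Since the $U_i^{\sf no}$ are drawn independently, the $X_i$ are independent, and the adversary's success probability is $p_{{\sf succ}} = \Pr[\sum_i w_i X_i > W/2]$, taken over the draws $U_i^{\sf no} \sim \mathbf{U}_i^{\sf no}$ (and, when $\disclosure$ is randomized, over its internal coins, which enter through the way $\Delta_i$ and $\bm_i$ were defined as expectations). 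This is the claimed expression.

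The main obstacle is the self-referential nature of the equilibrium: the quantities $\Delta_i$ and $\bm_i$ are themselves computed at equilibrium, since they are probabilities over the \emph{equilibrium} choices $c_j$ of the other voters, yet I am using them to derive voter $i$'s best response. The resolution is that \Cref{thm:adv_success_prob} is a \emph{characterization} statement, not an existence statement: it says that \emph{if} a Bayesian Nash equilibrium of the induced game is reached, then voter behavior must take the stated threshold form and $p_{{\sf succ}}$ must equal the stated quantity. I would therefore emphasize that the argument is a fixed-point consistency check — given the equilibrium beliefs (equivalently, given the equilibrium values $\Delta_i, \bm_i$), each voter's best response is the threshold rule, and the threshold rule in turn is what generates those beliefs — rather than a constructive computation. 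A secondary subtlety to address carefully is the sign and non-negativity of $\Delta_i$ and the handling of $\Delta_i = 0$, plus confirming that the decomposition of expected utility into "outcome term plus bribe term" is legitimate, which it is because the bribe $b_i$ is committed in advance and added linearly to utility, and $f_i$ depends only on the published outcome $\disclosure(t)$; I would note that a voter deviating affects both terms only through the single marginal quantity in each (its own weight's contribution to the tally), which is what makes the clean closed form possible.
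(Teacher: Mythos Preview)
Your proposal is correct and follows essentially the same approach as the paper: write out voter $i$'s expected utility under each choice, take the difference, substitute the definitions of $\Delta_i$ and $\bm_i$ to obtain the threshold rule $U_i^{\sf no} \leq \bm_i b_i / \Delta_i$, and then aggregate the resulting indicators into the weighted sum that determines the $\mathsf{yes}$ outcome. Your additional remarks on the fixed-point nature of the equilibrium and the $\Delta_i = 0$ edge case go slightly beyond what the paper's proof spells out, but the core argument is identical.
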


\begin{proof}
    Under bribe vector $\mathbf{b}$ and bribery condition functions $\mathbf{f}$, 
    voter $i$'s expected utility $\mathbb{E}[U_i]$ given they vote $\mathsf{yes}$ is:
    $$\Pr[\mathsf{no} \text{ wins} | c_i = \mathsf{yes}] \cdot U_i^\mathsf{no} + b_i \cdot \Pr[f_i(\disclosure(t)) = 1 | c_i = \mathsf{yes}].$$

    And given they vote $\mathsf{no}$ is:
    $$\Pr[\mathsf{no} \text{ wins} | c_i = \mathsf{no}] \cdot U_i^\mathsf{no} + b_i \cdot \Pr[f_i(\disclosure(t)) = 1 | c_i = \mathsf{no}].$$

    Accordingly, in equilibrium, voter $i$ votes yes if and only if 
    \begin{align*}
        \mathbb{E}[U_i | c_i = \mathsf{no}] - \mathbb{E}[U_i | c_i = \mathsf{yes}] &\leq 0\\
        \Delta_i U_i^\mathsf{no}-\bm_ib_i &\leq 0 \\
        U_i^\mathsf{no} &\leq \frac{\bm_i b_i}{\Delta_i}.
    \end{align*}
    The second line follows by substituting the definitions of pivotality $\Delta_i$ and bribe margin $\bm_i$.

    Every voter's private utility is drawn as $U_i^{\sf no} \stackrel{\$}{\leftarrow} \mathbf{U}_i^{\sf no}$, so a voter's contribution to the $\mathsf{yes}$ total is the random variable $w_iX_i$ where $X_i = \mathbb{I}[U_i^{\sf no} \leq \frac{\bm_i b_i}{\Delta_i}]$. The adversary succeeds when the $\mathsf{yes}$ total exceeds $W/2$, which occurs with probability
    
    \[
    p_{{\sf succ}} = \Pr\left[\sum_{i=1}^n w_i X_i > W/2\right].
    \]

\end{proof}

The key to analyzing B-Privacy for a given bribery strategy $(\textbf{b},\textbf{f})$ lies in computing the equilibrium values of bribe margin $\bm_i$ and pivotality $\Delta_i$. We show later for the tally algorithms we consider, bribe margins (or bounds on them) can be expressed in terms of pivotality, which simplifies analysis. 


\subsection{Optimal adversarial strategy}\label{sec:adv_strategy}

To compute B-Privacy, we must solve the optimization / minimization problem in Definition \ref{def:b-privacy}.

A key insight is that we can restrict attention to strategies where condition functions operate independently across voters—that is, voter $i$'s payment depends only on the outcome, not on correlations with other voters' payments. While correlated payment schemes might seem more powerful, we show in Appendix \ref{app:optimal_strategy_indep} that any correlated strategy can be replaced by an equivalent independent strategy with the same B-privacy. The intuition is that each voter cares only about their own probability of receiving a bribe under different outcomes. Correlating payments across voters doesn't change these individual probabilities, so it provides no advantage to the adversary. This allows us to focus on the simpler case where $\mathbf{f} = (f_1, \ldots, f_n)$ for independent condition functions.


Given this simplification, our goal becomes characterizing the adversary's optimal choices of bribe amounts $\mathbf{b}$ and condition functions $\mathbf{f}$ that, per the definition of B-privacy, minimize total cost while achieving success probability $p$. We refer to the values of $\mathbf{b}$ and $\mathbf{f}$ that solve this B-privacy optimization problem $\min_{\mathbf{b}, \mathbf{f}} |\mathbf{b}|$ as \emph{optimal}.

We show in~\Cref{thm:optimal-bcf} that the optimal strategy has a simple structure: the adversary should choose condition functions that maximally exploit the information revealed by the tally algorithm, then set bribe amounts to achieve the target success probability at minimum cost.


As with Definitions \ref{def:pivotality} and \ref{def:bribe-margin}, probabilities are taken over other voters' utilities $U_j^\mathsf{no} \sim \mathbf{U}_j^\mathsf{no}$ for $j \neq i$, which determine their equilibrium choices.

\begin{theorem}[Optimal bribery condition functions]\label{thm:optimal-bcf}
The optimal condition function for voter $i$ is the one that maximizes bribe margin $\bm_i$. Define $p_{i,c}^{o} = \Pr[\disclosure(t) = o \mid c_i = c]$. Then the optimal condition function takes the form:
$$f^*_i(o) = \mathbb{I}\{p_{i,\mathsf{yes}}^{o} \geq p_{i,\mathsf{no}}^{o}\}$$
and corresponds to optimal bribe margin:
$$\bm_i^* = \sum_{o \in O} \max(p_{i,\mathsf{yes}}^{o} - p_{i,\mathsf{no}}^{o}, 0).$$
\end{theorem}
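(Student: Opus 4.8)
The plan is to reduce the statement to a per-voter, per-outcome optimization, using \Cref{thm:adv_success_prob} to isolate what matters to the adversary. Recall from that theorem that in equilibrium voter $i$ votes $\mathsf{yes}$ exactly when $U_i^{\sf no} \leq \bm_i b_i / \Delta_i$, so the probability that $X_i = 1$ is monotone increasing in $\bm_i$ (for fixed $b_i$ and $\Delta_i$, and noting $\bm_i b_i/\Delta_i$ is increasing in $\bm_i$ provided $b_i, \Delta_i > 0$; the degenerate cases are handled separately). Hence, for a fixed bribe budget, the adversary's success probability is maximized by maximizing each $\bm_i$ individually. Since by the independence reduction (Appendix \ref{app:optimal_strategy_indep}, already invoked in the text) we may assume $\mathbf{f} = (f_1, \ldots, f_n)$ with each $f_i$ depending only on the outcome, the choice of $f_i$ is decoupled from the choice of $f_j$ for $j \neq i$ and from $\mathbf{b}$, so it suffices to show that $f_i^*$ as defined maximizes $\bm_i$ and to compute the resulting $\bm_i^*$.

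Next I would expand the bribe margin using the law of total probability over outcomes. Writing $O$ for the (countable, or discretized) outcome space and $p_{i,c}^o = \Pr[\disclosure(t) = o \mid c_i = c]$, we have
\[
\bm_i = \Pr[f_i(\disclosure(t)) = 1 \mid c_i = \mathsf{yes}] - \Pr[f_i(\disclosure(t)) = 1 \mid c_i = \mathsf{no}] = \sum_{o \in O} f_i(o)\,(p_{i,\mathsf{yes}}^o - p_{i,\mathsf{no}}^o).
\]
Since $f_i(o) \in \{0,1\}$ may be chosen freely and independently for each $o$, this sum is maximized term-by-term: include $o$ (set $f_i(o) = 1$) whenever the coefficient $p_{i,\mathsf{yes}}^o - p_{i,\mathsf{no}}^o$ is nonnegative, and exclude it otherwise. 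This gives exactly $f_i^*(o) = \mathbb{I}\{p_{i,\mathsf{yes}}^o \geq p_{i,\mathsf{no}}^o\}$ and $\bm_i^* = \sum_{o \in O} \max(p_{i,\mathsf{yes}}^o - p_{i,\mathsf{no}}^o, 0)$, as claimed. The tie-breaking at equality is immaterial since those terms contribute zero.

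The main obstacle — and the reason this argument needs care rather than being a one-line observation — is the \emph{equilibrium consistency} issue: the probabilities $p_{i,c}^o$ are themselves defined with respect to the Bayesian Nash equilibrium induced by $(\mathbf{b}, \mathbf{f})$, so changing $f_i$ changes the equilibrium, which changes the $p_{i,c}^o$, which changes the optimal $f_i$. To close this loop cleanly I would argue as follows: fix any candidate equilibrium (a profile of threshold strategies for the voters) and observe that voter $j$'s best response depends on $f_j$ only through $\bm_j$, and on the other voters' strategies only through $\Delta_j$ and the distribution of $t$; in particular, replacing $f_i$ by $f_i^*$ while holding all voters' \emph{strategies} fixed leaves the distribution of $\disclosure(t)$ unchanged (the $f$'s do not enter the tally, only the payout), hence leaves every $\Delta_j$ and every $p_{j,c}^o$ unchanged, and weakly increases $\bm_i$ and hence $p_{\sf succ}$ by the monotonicity above. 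Thus $f_i^*$ is without loss of generality the right choice at any equilibrium, and one then takes $f_i = f_i^*$ for all $i$ simultaneously and appeals to existence of a Bayesian Nash equilibrium for the resulting game to complete the argument. A secondary technical point to flag is the atomlessness / support assumptions on $\mathbf{U}_i^{\sf no}$ (infinite support, stated in the model) ensuring the voter's indifference set has measure zero so the threshold characterization from \Cref{thm:adv_success_prob} applies verbatim; and if $O$ is continuous, the sums above become integrals against the appropriate measure, with $f_i^*$ the indicator of the set $\{o : p_{i,\mathsf{yes}}^o \geq p_{i,\mathsf{no}}^o\}$ and $\bm_i^*$ the integral of the positive part of the density difference — the argument is otherwise identical.
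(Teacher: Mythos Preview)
Your second paragraph --- the term-by-term maximization yielding $f_i^*(o) = \mathbb{I}\{p_{i,\mathsf{yes}}^o \geq p_{i,\mathsf{no}}^o\}$ and the formula for $\alpha_i^*$ --- is correct and essentially identical to the paper's.

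The first part, arguing that optimality forces each $f_i$ to maximize $\alpha_i$, is where your proof has a real gap. You correctly flag equilibrium consistency as the obstacle, but the resolution you offer is internally inconsistent. You write that ``replacing $f_i$ by $f_i^*$ while holding all voters' strategies fixed \ldots\ weakly increases $\alpha_i$ and hence $p_{\sf succ}$ by the monotonicity above.'' But if all strategies are held fixed, $p_{\sf succ}$ is determined entirely by those strategies and does not depend on $\alpha_i$ at all --- the condition functions affect only payouts, not ballots cast. The monotonicity you invoke from \Cref{thm:adv_success_prob} is monotonicity of the \emph{equilibrium} voting probability in $\alpha_i$; to use it you must let voter $i$ re-optimize, at which point voter $i$'s strategy changes, the distribution of $t$ changes, every $\Delta_j$ and $p_{j,c}^o$ changes, and you are no longer comparing against the original equilibrium. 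Your closing appeal to ``existence of a Bayesian Nash equilibrium for the resulting game'' does not establish that this new equilibrium has weakly higher $p_{\sf succ}$, nor that the minimum budget is weakly lower.

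The paper closes the loop with a budget-reduction trick that avoids any equilibrium shift. Suppose $f_k$ achieves bribe margin $\alpha_k < \alpha_k^{\max}$ at the current equilibrium. Replace $f_k$ by a function achieving $\alpha_k^{\max}$ \emph{and simultaneously} scale the bribe down to $b_k' = b_k \cdot \alpha_k / \alpha_k^{\max}$. The product $\alpha_k b_k$ is preserved, so by \Cref{thm:adv_success_prob} voter $k$'s threshold is unchanged; since nothing else changed, the original strategy profile remains an equilibrium verbatim, the distribution of $t$ is identical, $p_{\sf succ}$ is exactly the same --- but the total budget is strictly smaller, contradicting optimality of $(\mathbf{b},\mathbf{f})$. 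This simultaneous rescaling of $b_k$ is the missing move in your argument.
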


The proof is deferred to Appendix~\ref{app:optimal_strategy_bcf}. This theorem tells us that adversaries should condition bribes on outcomes that serve as strong evidence of voter compliance. Specifically, the adversary pays voter $i$ when the observed outcome $o$ is more likely under the scenario where voter $i$ voted $\mathsf{yes}$ than under the scenario where they voted $\mathsf{no}$. The resulting bribe margin $\bm_i^*$ quantifies how much this optimal conditioning improves the adversary's ability to target bribes. Theorem~\ref{thm:adv_success_prob} shows that higher bribe margins allow the adversary to achieve the same success probability with lower total bribe costs.

\paragraph{Computing $B_{\disclosure}(p)$.} Given optimal condition functions $\mathbf{f}^*$, computing $B_{\disclosure}(p)$ requires solving two interconnected problems: (1) determining the Bayesian Nash equilibrium and (2) finding the optimal bribe allocation.

\textbf{Problem 1: Determining the Bayesian Nash equilibrium.} We do not attempt to characterize the equilibrium analytically, and instead use a computational approach to find the fixed point of voter pivotalities $\boldsymbol{\Delta}$. We iterate until convergence: given current pivotalities, we compute voter choice probabilities, which in turn determine new pivotalities.

\textbf{Problem 2: Optimal bribe allocation.} This optimization problem is challenging because the optimal bribe amounts $\mathbf{b}$ depend on the equilibrium pivotalities, but changing the bribes alters the equilibrium itself, creating a circular dependency. Moreover, bribe margins $\boldsymbol{\alpha}$ for some tally algorithms also depend on the pivotalities, adding further complexity.

Our solution is to decouple these problems by fixing reasonable bribe allocation strategies upfront, then computing the resulting equilibrium for each strategy. This approach is well-motivated for several reasons. First, we expect that reasonable allocation strategies differ minimally in terms of optimality and resulting B-privacy values—the loss in theoretical optimality is likely small compared to the uncertainties introduced by our modeling assumptions. Second, our goal is to identify broad relationships and trends rather than exact numerical predictions, so small differences in allocation optimality are less consequential than understanding how tally algorithms affect bribery costs. Finally, the simplifying assumptions in our model make the precise optimization problem less meaningful than exploring heuristic strategies that better reflect realistic adversarial behavior.

We test several bribe allocation strategies an adversary might employ, focusing bribes on voters who are unlikely to support the adversary's preferred outcome (since bribing supporters would be wasteful). These include: even distribution across opposing voters, weighting by voter weight among opponents (quadratically, logarithmically, etc.), and targeted approaches focusing on the largest opposing voters. For each allocation strategy, we adopt an iterative method with full details provided in Appendix~\ref{app:comp-methods}:
\begin{enumerate}[noitemsep]
    \item Allocate bribes according to the chosen strategy subject to budget constraint $\sum_i b_i = B$.
    \item Compute the resulting equilibrium by iterating until pivotality convergence:
    \begin{enumerate}[noitemsep]
        \item Compute bribe margins and voter choice probabilities using current pivotalities.
        \item Update pivotalities based on resulting bribe margins and voter choice probabilities.
    \end{enumerate}
\end{enumerate}

For each strategy, we perform a binary search over budgets to identify the minimal budget~$B$ that yields adversarial success probability~$p$. Among these, we select the strategy with the lowest budget, providing a practical approximation of B-privacy, $B_{\disclosure}(p)$.

\paragraph{Summary: computing B-privacy.} To compute $B_{\disclosure}(p)$ for a given tally algorithm $\disclosure$ and probability $p$, we must solve the optimization problem specified in Definition~\ref{def:b-privacy}. While optimal condition functions $\mathbf{f}^*$ have the analytical characterization of Theorem~\ref{thm:optimal-bcf}, we use heuristic bribe allocation strategies that might be adopted by a realistic adversary to decouple bribe allocation from the equilibrium, and then approximate this equilibrium using an iterative approach. We apply this approach to real DAO data in Section~\ref{sec:noise_experiments}, and find empirically that it converges reliably across all tally algorithms and proposals in our analysis.

\subsection{B-Privacy and Plausible Deniability}

To connect B-Privacy with classical privacy notions, we examine its relationship to \emph{plausible deniability}, a privacy concept that captures whether individual choices can be inferred from disclosed information. We adapt this notion to our weighted voting and bribery game setting.

Within our bribery game framework, plausible deniability captures the adversary's uncertainty about a voter's choice: given the tally outcome, it measures how much the adversary's confidence in their inference exceeds their prior belief about that voter's behavior. The normalization by the prior probability is essential—it distinguishes between information genuinely revealed by the tally versus voters who were already predictable based on the adversary's prior knowledge in our model. We take the minimum across both choices to capture the worst-case scenario: plausible deniability is compromised if the adversary can confidently rule out either choice, regardless of which choice the voter actually made.

\begin{definition}[Plausible Deniability]\label{def:plausible}
    For voter $i$ and tally algorithm $\disclosure$, let $p^{i,c}_{o} = \Pr[c_i = c|\disclosure(t)=o]$. We define the plausible deniability as:
    \[
        PD_i^{\disclosure}(o)=\min\left(\frac{p^{i,\yes}_{o}}{\Pr[c_i=\yes]}, \frac{p^{i,\no}_{o}}{\Pr[c_i=\no]}\right).
    \]
    The expected plausible deniability of voter $i$ under tally algorithm $\disclosure$ is:
    \[
        EPD_i^{\disclosure}=\mathbb{E}_{o \sim \disclosure(t)}[PD_i^{\disclosure}(o)].
    \]
    where the expectation is over the distribution of possible outcomes induced by the randomness in voter utilities in our bribery game model.
\end{definition}

This definition directly captures the privacy failures demonstrated in Section~\ref{sec:pd_in_daos}. When our attacks definitively identified a voter's choice—for instance, determining that a whale must have voted yes because their weight exceeds the total no votes—the voter has zero plausible deniability under this definition. Specifically, if the tally outcome $o$ allows the adversary to conclude with certainty that voter $i$ chose option $c$, then $p^{i,c}_{o} = 1$ and $p^{i,c'}_{o} = 0$ for $c' \neq c$, yielding $PD_i^{\disclosure}(o) = 0$.


Plausible deniability directly relates to the bribe margin $\bm_i$, which in turn determines B-Privacy:

\begin{theorem}\label{lem:bribe-margin-epd}
    For voter $i$ in the Bribery Game with tally algorithm $\disclosure$,
    \[
    EPD_i^{\disclosure} = 1-\bm_i^*,
    \]
    where $\bm_i^*$ is the optimal bribe margin for voter $i$.
\end{theorem}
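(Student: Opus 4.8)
The plan is to relate the two expressions by switching between the "forward" conditional probabilities $p_{i,c}^o = \Pr[\disclosure(t)=o \mid c_i = c]$ appearing in Theorem~\ref{thm:optimal-bcf} and the "posterior" conditional probabilities $p^{i,c}_o = \Pr[c_i = c \mid \disclosure(t)=o]$ appearing in Definition~\ref{def:plausible}, using Bayes' rule. Concretely, writing $\pi_c = \Pr[c_i = c]$ for the prior and $q_o = \Pr[\disclosure(t)=o]$ for the marginal outcome probability, Bayes gives $p^{i,c}_o = p_{i,c}^o \pi_c / q_o$, equivalently $p_{i,c}^o = p^{i,c}_o\, q_o / \pi_c$. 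Substituting this into the optimal bribe margin formula $\bm_i^* = \sum_{o} \max(p_{i,\mathsf{yes}}^o - p_{i,\mathsf{no}}^o, 0)$ from Theorem~\ref{thm:optimal-bcf} rewrites each summand as $q_o \max\!\left( p^{i,\mathsf{yes}}_o/\pi_\mathsf{yes} - p^{i,\mathsf{no}}_o/\pi_\mathsf{no},\, 0\right)$, which already looks like an expectation over $o \sim \disclosure(t)$ of a pointwise quantity involving the ratios that define $PD_i^{\disclosure}(o)$.

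The main step is then a pointwise identity: for each outcome $o$, I would show that
\[
1 - PD_i^{\disclosure}(o) \;=\; \max\!\left( \frac{p^{i,\mathsf{yes}}_o}{\pi_\mathsf{yes}} - \frac{p^{i,\mathsf{no}}_o}{\pi_\mathsf{no}},\; 0 \right) + \min\!\left( \frac{p^{i,\mathsf{no}}_o}{\pi_\mathsf{no}} - \frac{p^{i,\mathsf{yes}}_o}{\pi_\mathsf{yes}},\; 0\right)?
\]
That is not quite right as stated; the clean claim I actually want is that $1 - \min(a_o, b_o) = $ the per-outcome contribution to $\bm_i^*$ after dividing out $q_o$, where $a_o = p^{i,\mathsf{yes}}_o/\pi_\mathsf{yes}$ and $b_o = p^{i,\mathsf{no}}_o/\pi_\mathsf{no}$. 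To see this, note that $a_o - b_o = (p^{i,\mathsf{yes}}_o/\pi_\mathsf{yes}) - (p^{i,\mathsf{no}}_o/\pi_\mathsf{no})$, so $\max(a_o - b_o, 0) = a_o - \min(a_o, b_o)$ and symmetrically $\max(b_o - a_o, 0) = b_o - \min(a_o, b_o)$. The key arithmetic fact that makes everything collapse is the weighted sum $\pi_\mathsf{yes} a_o + \pi_\mathsf{no} b_o = p^{i,\mathsf{yes}}_o + p^{i,\mathsf{no}}_o = 1$ (posterior probabilities sum to one), but more useful is that $q_o a_o = p_{i,\mathsf{yes}}^o$ and summing $p_{i,\mathsf{yes}}^o$ over $o$ gives $1$ (it is a distribution), and likewise for $b_o$. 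Hence $\sum_o q_o a_o = \sum_o q_o b_o = 1$, so $\sum_o q_o \min(a_o,b_o) = 1 - \sum_o q_o(a_o - \min(a_o,b_o)) = 1 - \sum_o q_o \max(a_o - b_o, 0) = 1 - \bm_i^*$. Since $EPD_i^{\disclosure} = \mathbb{E}_{o}[\min(a_o, b_o)] = \sum_o q_o \min(a_o, b_o)$, this yields $EPD_i^{\disclosure} = 1 - \bm_i^*$ directly.

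The order of the write-up would therefore be: (1) recall the Bayes-rule relation between forward and posterior conditionals and introduce the shorthand $a_o, b_o, q_o, \pi_c$; (2) observe that $\sum_o p_{i,c}^o = 1$ for each fixed $c$ because $\disclosure(t)$ conditioned on $c_i = c$ is a probability distribution over $O$, hence $\sum_o q_o a_o = \sum_o q_o b_o = 1$; (3) expand $\bm_i^*$ using $\max(x,0) = x - \min(x, 0)$ together with $\min(a_o - b_o, 0) = \min(a_o, b_o) - b_o$ to get $\bm_i^* = \sum_o q_o \max(a_o - b_o, 0) = \sum_o q_o (a_o - \min(a_o, b_o)) = 1 - \sum_o q_o \min(a_o, b_o)$; (4) identify $\sum_o q_o \min(a_o, b_o)$ with $EPD_i^{\disclosure}$ by Definition~\ref{def:plausible} and conclude. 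The main obstacle, such as it is, is purely bookkeeping: being careful that the "probabilities are taken over other voters' utilities" convention makes $\disclosure(t)$ a well-defined random variable with a genuine distribution over $O$ (so the normalizations $\sum_o p_{i,c}^o = 1$ and $\sum_o q_o = 1$ are legitimate), and handling the possibly continuous/mixed outcome space $O$ — if $O$ is not discrete the sums become integrals against the outcome measure and one must check the interchange of $\min$/$\max$ with integration, which is routine since all integrands are bounded. I do not anticipate any genuinely hard estimate; the identity is exact and follows from Bayes plus the $\max(x,0)=x-\min(x,0)$ trick.
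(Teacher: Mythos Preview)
Your proposal is correct and follows essentially the same approach as the paper's proof: apply Bayes' rule to convert between the forward conditionals $p_{i,c}^o$ and the normalized posteriors $p^{i,c}_o/\pi_c$, use the identity $\max(a-b,0)=a-\min(a,b)$ (equivalently the paper's $\max(p_{i,\yes}^o-p_{i,\no}^o,0)+\min(p_{i,\yes}^o,p_{i,\no}^o)=p_{i,\yes}^o$), and then invoke $\sum_o p_{i,\yes}^o=1$ to collapse the first sum to~$1$. Your shorthand $a_o,b_o,q_o$ and the remark about continuous outcome spaces are cosmetic additions, but the argument is identical to the paper's.
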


The proof can be found in Appendix~\ref{app:bound_PD}. This equation demonstrates the inverse relationship between bribe margin and plausible deniability: as the adversary's ability to condition bribes on outcomes increases (higher $\bm_i^*$), the voter's privacy decreases (lower $EPD_i$).

\section{Noised Tally Algorithms}
\label{sec:noise_mechanism}

The attacks presented in~\Cref{sec:pd_in_daos} show that releasing raw tallies undermines user privacy---in many cases stripping users even of plausible deniability, i.e., leaking their exact voting choices. This ability by adversaries drives down B-privacy.

In this section we explore the mechanism mentioned  in~\Cref{sec:prelims}:  \textit{noised} tally algorithms. We analyze how their properties can improve B-privacy while maintaining a high level of transparency in resulting tallies. 

As with our previous analyses, for simplicity we assume binary choice proposals. The goal then is to tally the weight that voted $\mathsf{yes}$ (the remaining weight must have voted $\mathsf{no}$) with added noise from some distribution $\nu$:
\[
\disclosure_{\mathsf{noised}(\nu)}(t) = Y+\sum_{i: c_i = \mathsf{yes}} w_i ,\quad Y \stackrel{\$}{\leftarrow} \nu.
\]

A subtle but key technical issue here is that noising may yield a tally that \textit{indicates the incorrect winner}. As long as the support of the noise distribution includes values greater than the margin, there is a risk of incorrectly flipping the result. Tightly bounding $\nu$ might seem to address this problem, but may unduly erode the benefit of noising and will still induce an incorrect result with non-zero probability. We instead consider a simple solution: Specify the winner in the tally in addition to the noised result. We call this a \textit{corrected} noised tally algorithm. Denoted by $\disclosure_{\mathsf{noised}(\nu)+}$, it is as follows:
\[\disclosure_{\mathsf{noised}(\nu)+}(t)=\left (\disclosure_{\mathsf{noised}(\nu)}(t), \disclosure_{\mathsf{winner}}(t)\right).\]

Our noising approach  appears similar to that used for differential privacy. Indeed, when using an \textit{uncorrected} noised tally with Laplacian noise, we show in~\Cref{app:dp-disclosure} that we can use differential privacy to upper-bound the advantage of any bribery condition function. However, \textit{differential privacy guarantees break down when the correct winner is revealed}. This motivates our derivation of an upper bound on bribe margins for corrected noised tallies.

\begin{theorem}[Bound on corrected noised tally bribe margin]\label{thm:noised-bribe-margin}
    When the corrected noised tally $\disclosure_{\mathsf{noised}(\nu)+}$ is used,  the optimal bribery condition function for any voter $i$ has bribe margin at most 
    \[
    \Delta_i + (1-\Delta_i)\int_{-\infty}^{\infty} \max\left(\Pr[Z=u-w_i] - \Pr[Z=u], 0 \right)\,du, 
    \]
    
    \noindent for random variable $Z \sim \nu$.
\end{theorem}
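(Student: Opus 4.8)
The plan is to bound the optimal bribe margin $\bm_i^*$ for the corrected noised tally by splitting on whether voter $i$ is pivotal — i.e., whether flipping $c_i$ between $\mathsf{yes}$ and $\mathsf{no}$ changes the winner $\disclosure_{\mathsf{winner}}(t)$. By Theorem~\ref{thm:optimal-bcf}, $\bm_i^* = \sum_{o \in O}\max(p^o_{i,\mathsf{yes}} - p^o_{i,\mathsf{no}}, 0)$, where here an outcome $o = (y, \mathsf{win})$ consists of the noised $\mathsf{yes}$-tally $y$ and the revealed winner $\mathsf{win}$. First I would condition on the event $E$ that voter $i$'s choice is pivotal for the winner component; note $\Pr[E]$ is exactly the pivotality $\Delta_i$ (this is essentially the definition of $\Delta_i$ specialized to $\disclosure_{\mathsf{winner}}$, since whether $\mathsf{no}$ wins differs between $c_i = \mathsf{no}$ and $c_i = \mathsf{yes}$ precisely when $i$ is pivotal). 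On the pivotal event, the adversary learns $c_i$ with certainty from the winner alone, so the contribution to the bribe margin is at most $\Delta_i \cdot 1$. This gives the first term.

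Next I would handle the non-pivotal event $\bar E$, which has probability $1 - \Delta_i$. Conditioned on $\bar E$, the winner component carries no information about $c_i$, so the only leakage is through the noised tally $y$. Writing $S = \sum_{j \ne i, c_j = \mathsf{yes}} w_j$ for the $\mathsf{yes}$-weight of the other voters and $Z \sim \nu$, the observed noised tally is $y = S + Z + w_i$ if $c_i = \mathsf{yes}$ and $y = S + Z$ if $c_i = \mathsf{no}$. Conditioned on $\bar E$ and on a fixed value of $S$, the distribution of $y$ under $c_i = \mathsf{yes}$ is that of $S + w_i + Z$ and under $c_i = \mathsf{no}$ is that of $S + Z$; the statistical distance between these two, which is exactly $\sum_{y}\max(p^y_{i,\mathsf{yes}} - p^y_{i,\mathsf{no}}, 0)$ after the shift-invariant change of variables $u = y - S$, equals $\int \max(\Pr[Z = u - w_i] - \Pr[Z = u], 0)\, du$. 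The key point is that this integral does not depend on $S$, so averaging over the conditional distribution of $S$ given $\bar E$ leaves it unchanged, and the total contribution from $\bar E$ is at most $(1 - \Delta_i)\int \max(\Pr[Z = u - w_i] - \Pr[Z = u], 0)\,du$. Adding the two pieces yields the claimed bound.

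The main obstacle — and the place to be careful — is the interaction between the two components of the outcome and the conditioning. One must justify that, on the non-pivotal event, conditioning additionally on the (uninformative) winner value genuinely adds nothing, and that the per-outcome $\max(\cdot, 0)$ in Theorem~\ref{thm:optimal-bcf} decomposes cleanly across the partition into $E$ and $\bar E$. Formally, I would argue that for each fixed winner value $\mathsf{win}$ and each $y$, $\max(p^{(y,\mathsf{win})}_{i,\mathsf{yes}} - p^{(y,\mathsf{win})}_{i,\mathsf{no}}, 0) \le \max(p^{(y,\mathsf{win})}_{i,\mathsf{yes}} - p^{(y,\mathsf{win})}_{i,\mathsf{no}}, 0)\mathbb{I}[\text{pivotal} \mid y,\mathsf{win}] + (\text{non-pivotal mass})$, then sum; the subtlety is that $S$ (hence pivotality) is not determined by $y$ alone, so one works with the joint law of $(S, Z)$ and integrates out, using that the noise $Z$ is independent of $(S, \text{pivotality})$ and that the winner is a deterministic function of $(S, c_i)$ alone. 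A clean way to organize this is to note $\bm_i^* \le \mathbb{E}_S[\,\mathbb{I}[S \text{ makes } i \text{ pivotal}] \cdot 1 + \mathbb{I}[S \text{ non-pivotal}]\cdot d_{\mathrm{TV}}(S + w_i + Z,\, S + Z)\,]$, which after the shift-invariance observation is exactly the stated expression.
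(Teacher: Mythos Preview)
Your proposal is correct and follows essentially the same route as the paper: condition on the other voters' $\mathsf{yes}$-weight $S$, pull the per-outcome $\max(\cdot,0)$ inside the expectation over $S$ (this is the key inequality $\max\bigl(\sum_s f(s),0\bigr)\le\sum_s\max(f(s),0)$ that the paper applies outcome-by-outcome), then split $S$ into pivotal versus non-pivotal values and use shift-invariance of the noise integral on the non-pivotal part. Your closing expression $\bm_i^* \le \mathbb{E}_S[\,\mathbb{I}[\text{pivotal}]\cdot 1 + \mathbb{I}[\text{non-pivotal}]\cdot d_{\mathrm{TV}}(S+w_i+Z,\,S+Z)\,]$ is exactly what the paper derives after its $\max$-swap and Fubini step, so the two arguments coincide.
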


The proof is deferred to Appendix~\ref{app:optimal_strategy_noisy}.

\paragraph{Intuition.}  The bribe margin decomposes based on whether voter $i$ is pivotal. When pivotal (with probability $\Delta_i$), voter $i$'s vote determines the winner, giving the adversary perfect conditioning ability. When not pivotal (with probability $1-\Delta_i$), both choices yield the same winner, so the adversary must use the noised tally to distinguish between them. The integral term measures how much the noise distributions overlap when voter $i$'s weight shifts the tally by $w_i$. Greater overlap means less adversarial advantage from the noised information.

\Cref{thm:noised-bribe-margin} result reveals a fundamental tradeoff in noised tally algorithms. While outputting the winner guarantees correctness of the final decision, transparency depends critically on the noise distribution chosen. Adding more noise improves B-Privacy (by reducing the integral term) but degrades the fidelity of the raw tally, limiting its usefulness for understanding margins and community consensus.

In Section~\ref{sec:noise_experiments}, we explore navigation of this tradeoff in practice—that is, whether we can achieve strong relative B-Privacy with sufficiently low error in the raw tally to preserve meaningful transparency.

\subsection{Calibrating noise}
\label{sec:calibrating-noise}

In our Section~\ref{sec:noise_experiments} experiments we use a Laplacian noise distribution $\nu = \mathrm{Lap}(0,b)$ for $\disclosure_{\mathsf{noised}(\nu)+}$. For a given proposal, we calibrate the parameter $b$ in a way that helps with understanding how tallies are perturbed and thus the impact on transparency, as follows. 

Let the total voting weight for the proposal be $W$. For a noise distribution $\nu$, we define the \textit{tally perturbation} with frequency $q$ as a percentage $\talprtb\%$ such that:
\begin{equation*}
    \Pr(|Y|\le \talprtb W) = q, \mbox{ for }Y \stackrel{\$}{\leftarrow} \mathcal{\nu}.
\end{equation*}

In other words, a $q$-fraction of the time, the noise $Y$ drawn from $\nu$ has magnitude $\leq dW$. In our experiments, we set the frequency $q = 0.95$. We explore $\talprtb \in [0\%,100\%]$, but in practical applications would typically choose small values, e.g., $\talprtb = 10\%$. For such parameters, 95\% of the time, $|Y| \leq 0.1 W$, so that the published tally differs from the true, raw tally, by at most 10\%. 

For a chosen frequency $q$ and value $d$ of tally perturbation, we set $b$ for noise $\nu = \mathrm{Lap}(0,b)$ to achieve it. For Laplace noise $Y\sim \mathrm{Lap}(0,b)$, we need $\Pr(|Y|\le dW) = q$. Since $\Pr(|Y|\le x)=1-e^{-x/b}$, this gives us $1-e^{-dW/b} = q$, which solves to $b = \frac{dW}{-\ln(1-q)}$. For our experimental choice of $q = 0.95$, this simplifies to $b=\frac{dW}{\ln 20}$.

An example helps illustrate.

\begin{example}[Tally perturbation]
\label{ex:TP}
Consider a proposal with:
\begin{itemize}
    \item Weights: {$\mathbf{w} = (1,\, 1.1,\, 1.3)$} (\mbox{thus }$W = 3.4$)
    \item Voter choices: $\mathbf{c} = (\textsf{yes}, \textsf{no}, \textsf{yes})$
    \item Raw tally: $T = (\mbox{\textsf{no: }} 1.1, \mbox{\textsf{yes: }} 2.3)$ 
\end{itemize} 

\noindent Suppose $q = 0.95$ and \talprtb = 10\%. Then computation of the corrected noised tally might look like this:

\smallskip

\begin{itemize}
\item Noise distribution: $\nu = \mathrm{Lap}(0,b)$ for $b =0.1 W/\ln 20$
\item Noise draw: $Y = 0.1 \sim \mathcal{\nu}$
\item Noised tally: $\tilde{T} = (\mbox{\textsf{no: }} 1.2 , \mbox{\textsf{yes: }} 2.2 )$
\item Corrected noised tally: $(\mbox{Winner:} \textsf{yes}, \tilde{T})$
\end{itemize}

\noindent Note that the noise magnitude, $0.1 < \talprtb W = 0.1 \times 3.4 = 0.34$, as expected $95\%$ of the time ($q = 0.95$). Note also that while individual voter choices can be deduced with certainty from the raw tally, they cannot be from the noised tally. (E.g., which voter voted \textsf{no}?)
\end{example}

\subsection{Practical attacks on noised tallies}
\label{sec:noised_tally_attacks}

To provide practical intuition for how noise limits adversarial capabilities, we adapt our raw tally attacks from~\Cref{sec:pd_in_daos} to the noised setting. Consider an adversary who receives a corrected noised tally calibrated to 10\% tally perturbation with 95\% frequency as defined in~\Cref{sec:calibrating-noise}.

Noise fundamentally breaks the subset sum attack, since it's extremely unlikely that any voter partitioning produces the exact noised tally values. The whale attack, however, remains partially viable with modifications.

Under raw tallies, if a whale's weight $w_i$ exceeds the total votes $s_j$ for some choice $j$, then certainly $c_i \neq j$. In a noised tally with 10\% tally perturbation, this logic requires adjustment: to maintain at least 95\% confidence, the adversary can only conclude $c_i \neq j$ if $w_i > s_j + 0.1W$, accounting for the possibility that noise increased the observed tally for choice $j$ by up to $0.1W$ from its true value. Additionally, when the true winner is released (as it is in our corrected noised tally algorithm) the adversary can always conclude $c_i = j$ if $w_i > \frac{W}{|C|}$ (the voter is a majority whale) and $\text{tally}_{\text{winner}}(t) = j$.

This adapted whale attack—which we implement by modifying the threshold condition in our algorithm—represents a conservative adversarial strategy that maintains high confidence in leaked votes while acknowledging the uncertainty introduced by noise. To evaluate how noise affects attack effectiveness, we apply Laplace noise calibrated to tally perturbation of 10\% to every proposal and run the modified whale attack algorithm on the resulting noised tallies, repeating this process across 10 trials per proposal to account for the randomness in noise generation. Figure~\ref{fig:attack-noised} shows the dramatic reduction in attack effectiveness averaged across the 10 trials: the same DAOs that suffered near-complete privacy compromise under raw tallies achieve substantially better privacy protection with noised tallies.

The results illustrate why adding noise creates practical challenges for adversaries. Although a more aggressive adversary might tolerate greater uncertainty by adopting smaller thresholds, such strategies come at the cost of the confidence that makes targeted bribery effective. The limitations of the attack considered here are consistent with our theoretical bound from Theorem~\ref{thm:noised-bribe-margin}, showing how noise yields concrete privacy benefits by forcing adversaries to operate under uncertainty.

\begin{figure}[t!]
    \centering
    \includegraphics[width=1\linewidth]{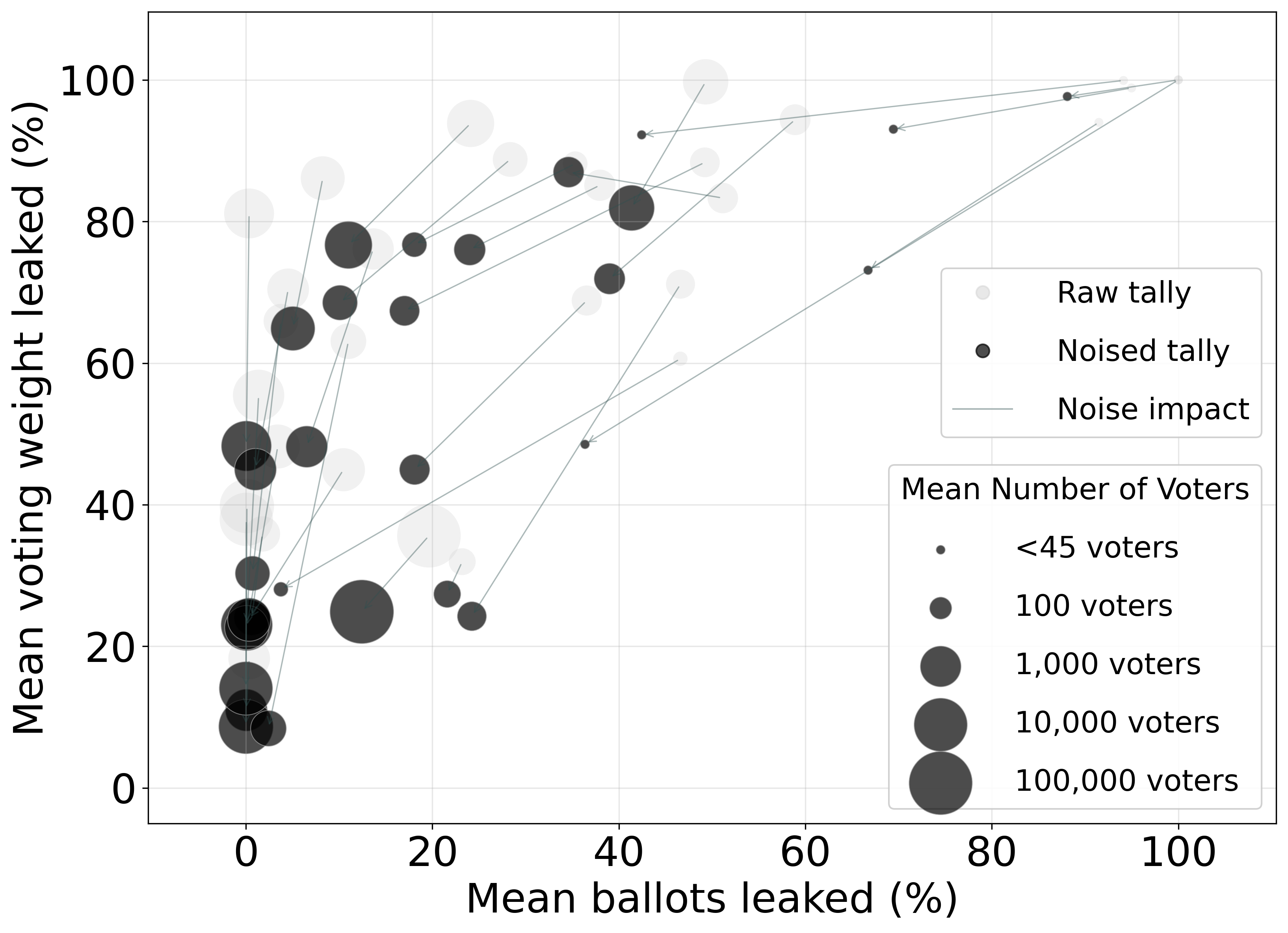}
    \caption{Comparison of raw tally versus adapted noised tally attack effectiveness across all DAOs when 10\% tally perturbation is applied. Setup mirrors Figure~\ref{fig:attack-aggregate}, but shows how each DAO's position shifts when attacks are adapted for noised tallies. The leftward and downward shift demonstrates greatly reduced attack effectiveness under noise.}
    \label{fig:attack-noised}
\end{figure}

\section{Empirical Analysis of B-Privacy}
\label{sec:noise_experiments}

Our theoretical framework characterizes how adversarial knowledge, weight distributions, and tally algorithms influence B-Privacy. To validate these insights and explore the practical privacy-transparency tradeoffs, we conduct a simulation study that uses historical DAO voting data to model realistic weighted voting scenarios under our B-privacy framework. Specifically, we seek to understand in practice the extent to which noised tallying can raise the cost of bribery without undermining transparency, and how intrinsic election properties such as decentralization mediate this trade-off.

Unlike our earlier empirical evaluation of privacy attacks (Section~\ref{sec:pd_in_daos}), which directly analyzed historical voting records, this analysis requires simulation because B-privacy depends on unobservable voter utilities. We use the historical voting patterns to infer plausible voter utility distributions, then simulate how these voters would behave under different tally algorithms when facing strategic adversaries offering bribes.


We simulate B-privacy scenarios using the same DAO dataset from Section~\ref{sec:pd_in_daos}. We first exclude voters whose choice was abstention, treating abstention as non-participation since it doesn't directly contribute to vote totals, doesn't represent a preference on the proposal outcome, and allows more proposals to be considered in our binary choice model. We then restrict to proposals with exactly two voting choices (due to our model's current limitations) and exclude proposals with more than 30,000 voters due to computational infeasibility of the iterative equilibrium computation required for B-privacy calculation. Applying these filters yields 3,582 proposals across 30 DAOs.

We compare three tally algorithms: full-disclosure $\disclosure_{\textsf{public}}$ (revealing individual votes), corrected noised tally $\disclosure_{\mathsf{noised}(\nu)+}$ of~\Cref{thm:noised-bribe-margin}, and winner-only $\disclosure_{\textsf{winner}}$ (revealing only the winning choice). Since we use an upper bound on bribe margins for the corrected noised tally (Theorem~\ref{thm:noised-bribe-margin}), our computed B-privacy values for this algorithm represent lower bounds on the true B-privacy, making our privacy improvements conservative estimates. 

All code and data used in our experiments are open source and available at \url{https://anonymous.4open.science/r/dao-voting-privacy-B65C}. 


\subsection{Experimental setup}

Our simulation requires modeling several components that are unobservable in practice, most notably voter utilities and adversarial objectives. We use the historical DAO voting data to calibrate realistic parameters for these components, then simulate how voters would behave under our B-privacy framework when facing strategic adversaries. To be consistent with our model, in which the adversary always attempts to maximize the probability of a $\mathsf{yes}$ outcome, we model the winning side of historical proposals as $\mathsf{no}$ and the losing side as $\mathsf{yes}$. This corresponds to an adversary that attempts to compromise an election by bribing voters to flip the outcome from the likely result based on voters' true utilities to the opposite outcome with high probability. 

\paragraph{Voter utilities.} We simulate voter utilities using normal distributions centered at the observed vote choice: $\mathbf{U}_i^{\mathsf{no}} \sim \mathcal{N}(1, 1)$ if voter $i$ voted $\mathsf{no}$ and $\mathbf{U}_i^{\mathsf{no}} \sim \mathcal{N}(-1, 1)$ if they voted $\mathsf{yes}$. This captures realistic uncertainty: a voter who voted $\mathsf{no}$ would vote $\mathsf{yes}$ in our model only if their sampled utility is negative, which occurs with probability $P(\mathcal{N}(1,1) < 0) \approx 16\%$. Thus roughly 84\% of voters stick with their observed choice, representing reasonable but imperfect knowledge that could be inferred from public information such as past voting patterns or public statements. Crucially, while varying $\sigma$ affects absolute B-privacy levels (higher uncertainty increases B-privacy), we find that relative B-privacy—the ratio between the value for given tally algorithm and the value for the winner-only tally algorithm—remains stable across different values of $\sigma$. This makes our comparative analysis robust to the specific modeling assumptions about voter utilities.


\paragraph{Noise distribution and tally perturbation.} To set noise for $\disclosure_{\text{noised}(\nu)+}$ We apply the tally perturbation framework from Section~\ref{sec:calibrating-noise} using Laplacian noise for 10\% tally perturbation with 95\% frequency. This choice balances privacy protection with transparency preservation. While we focus on Laplacian noise due to its established use in differential privacy, the choice of noise distribution may impact the privacy-transparency tradeoff and merits further exploration.

\paragraph{Adversarial success probability.} We model a highly motivated adversary seeking a high probability of success by setting the adversarial target success probability to $p = 0.9$. Results for other high values of $p$ are qualitatively similar.


\paragraph{Bribe allocation strategies.} As discussed in Section~\ref{sec:adv_strategy}, we simplify the complex coupled optimization problem by testing several reasonable bribe allocation strategies that target voters unlikely to support the adversary's preferred outcome, then computing the resulting equilibrium for each strategy and selecting the one yielding the lowest budget for each proposal. Appendix~\ref{app:comp-methods} provides full computational details, including the specific allocation strategies we tested.

\paragraph{Summary of parameters.} Table~\ref{tab:experimental_parameters} summarizes all simulation parameters and their justifications.

\begin{table*}[h]
\centering
\begin{tabular}{lll}
\toprule
\textbf{Parameter} & \textbf{Value/Distribution} & \textbf{Justification} \\
\midrule
Voter Utility Distribution & Normal distribution & Standard choice for modeling continuous preferences \\
\midrule
Utility Parameters & $U_i^{\text{no}} \sim \mathcal{N}(\mu_i, 1)$ & Models adversarial uncertainty about voter preferences; $\sigma = 1$ means \\
& $\mu_i = +1$ if $c_i = \mathsf{no}$ & voters stick with observed choice with $\approx$84\% probability\\
& $\mu_i = -1$ otherwise &  \\
\midrule
Noise Distribution & Laplace distribution & Natural choice from differential privacy literature; \\
& & other distributions merit exploration \\
\midrule
Noise Parameter / & $Y \sim \text{Lap}(0, 0.1W/\ln 20)$ & Ensures 95\% probability that noised tally differs from\\
Tally Perturbation & $\Pr(|Y| \leq 0.1W) = 0.95$ & true tally by $\leq 10\%$; balances privacy and transparency \\
\midrule
Adversarial Success & $p = 0.9$ & Models highly motivated adversary with strong success \\
Probability & & requirements \\
\bottomrule
\end{tabular}
\caption{Simulation parameters for empirical analysis of B-Privacy}
\label{tab:experimental_parameters}
\end{table*}

\subsection{Results}

Our empirical analysis examines B-privacy improvements across different tally algorithms and explores how proposal characteristics influence the privacy-transparency tradeoff. Before presenting the detailed findings, we introduce a key metric for interpreting results and provide an overview of our main conclusions.

\paragraph{Minimum Decisive Coalition (MDC).} To analyze how proposal characteristics affect B-privacy, we introduce the \emph{Minimum Decisive Coalition} (MDC). For a given proposal, the \emph{MDC} is the smallest number of voters who could \emph{change} their votes to flip the outcome. For instance, in~\Cref{ex:TP}, the MDC is 1, as either $\mathsf{yes}$ voter could deviate to change the winner from $\mathsf{yes}$ to $\mathsf{no}$. Two factors drive MDC: the presence of whales and the closeness of the proposal margin. Large token holders can single-handedly or in small coalitions determine outcomes, while tight margins make individual votes more pivotal—both resulting in low MDC values. However, in the DAO context, whale concentration is typically the dominant factor. For example in many proposals we analyze, a single voter controls >50\% of weight (yielding MDC = 1), this reflects genuine centralization regardless of the margin. Thus, while MDC captures multiple dynamics, low MDC values in our dataset primarily indicate voting weight concentration rather than competitive electoral outcomes.

\paragraph{Overview of results.} Our experiments reveal that corrected noised tallying calibrated to 10\% tally perturbation improves B-privacy across DAO proposals while preserving transparency, although this effect is broadly limited by low MDC. Across all 3582 proposals, corrected noised tally algorithms achieve a geometric mean relative B-privacy improvement of 1.5× over full-disclosure tallies, with winner-only algorithms achieving 2.9× improvement. However when we consider the 649 proposals with MDC$\geq 5$ these improvements increase to 4.1x and 14.2x respectively, demonstrating that even a modest MDC is sufficient to achieve substantial B-privacy. 




\begin{figure}[t!]
    \centering
    \includegraphics[width=1\linewidth]{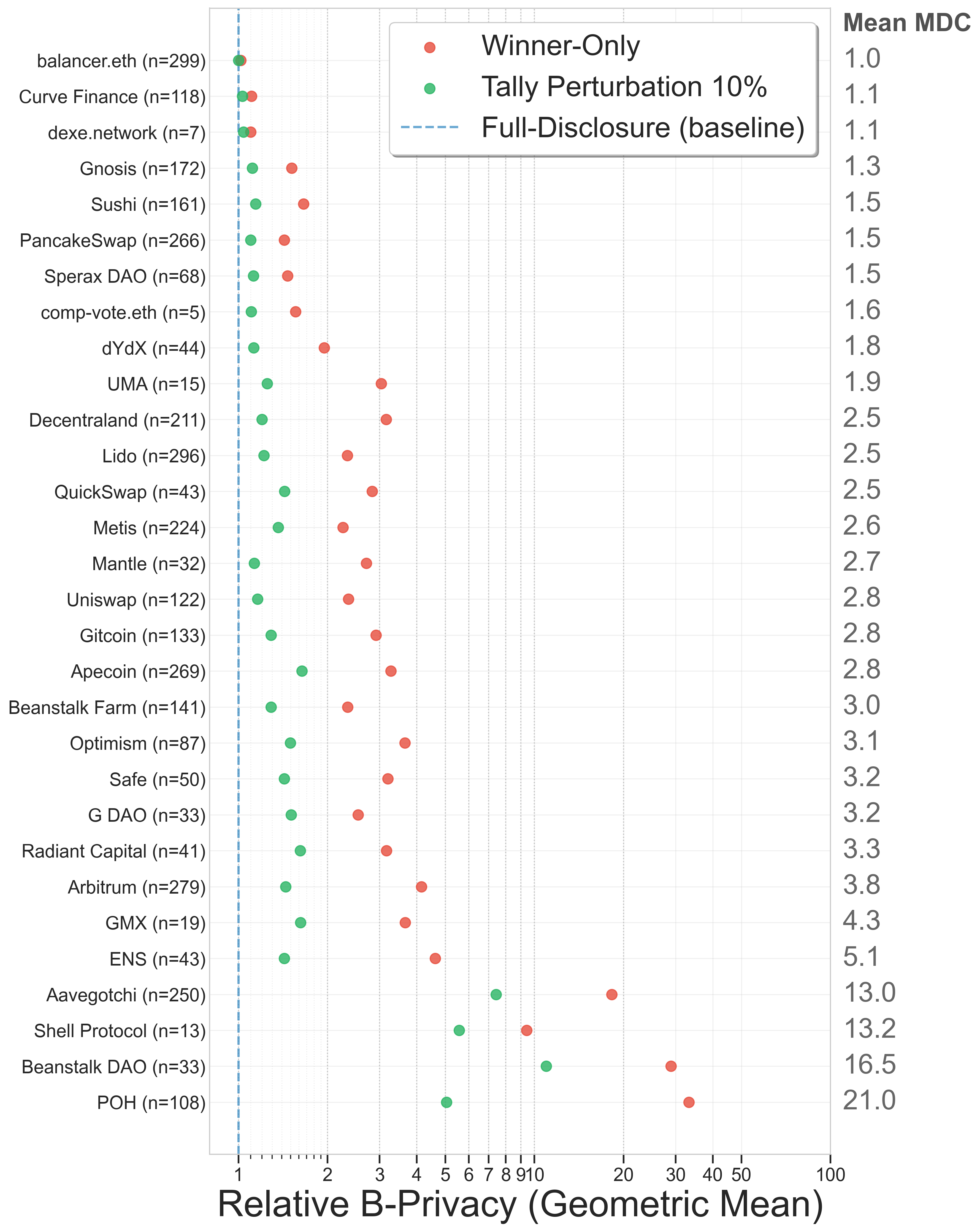}
    \caption{Relative B-Privacy by DAO under different Tally algorithms specified in Table \ref{tab:result-functions}. Results are averaged across proposals using geometric mean. Each row is a DAO, and rows are sorted using by the mean MDC across that DAOs proposals; red dots denote relative B-privacy in winner-only setting, green dots a lower bound on relative B-privacy in the corrected noised setting with tally perturbation $\talprtb=10\%$. The dashed vertical line marks the full-disclosure baseline. The $x$-axis is logarithmic; right-hand labels give the average MDC per DAO, left hand labels give the amount of proposals per DAO ($n$). In most DAOs using the corrected noised tally or winner-only algorithm improves B-Privacy, although the magnitude of this improvement is very dependent on MDC.}
    \label{fig:bribery_cost}
\end{figure}

Further exploring the simulation outcomes yields three broad results, which we expand upon here.

\paragraph{(B1) B-privacy across DAOs.} Figure~\ref{fig:bribery_cost} shows that across the proposals considered, the choice of tally algorithm does have an impact on B-privacy. B-privacy increases as the tally algorithm releases less information, with the winner-only algorithm providing maximal resistance to bribery and the corrected noised tally algorithm with a $10\%$ tally perturbation representing a tradeoff between bribery resistance and transparency. However, the fundamental finding is that \emph{most DAO proposals exhibit such low MDC (driven by large whales) that the choice of tally algorithm does not meaningfully affect B-privacy}. We see a clear delineation in relative B-privacy between 4 DAOs with average MDC $\geq13$ at the bottom of the plot (Proof of Humanity, Shell Protocol, Beanstalk DAO, and Aavegotchi) and the other DAOs which all have MDC $<6$. In the extreme case, almost all binary choice Balancer DAO proposals have a majority whale controlling $>50\%$ of voting weight, meaning any tally algorithm that reveals the outcome provides equivalent information to adversaries—rendering privacy mechanisms ineffective. More generally, when voting weight is heavily concentrated in low-MDC DAOs, even the winner-only tally algorithm provides only modest improvements: for many of these DAOs the relative B-privacy of the winner-only algorithm is $<3\times$, with corrected noised tallies offering little benefit over the public baseline. This highlights a fundamental limitation: tally privacy can only provide meaningful bribery resistance if voting weight is sufficiently decentralized.

\paragraph{(B2) Effect of noise calibration and MDC.} Despite limited effectiveness in most cases, Figure~\ref{fig:bribery_cost} does indicate that the corrected noised tally mechanism can meaningfully improve B-privacy in proposals with even a modest MDC. We explore this further in Figure~\ref{fig:B_privacy_vs_noise}, which plots relative B-Privacy versus tally perturbation for the Aavegotchi DAO, grouping proposals by their MDC. We focus on Aavegotchi because it offers both a wide range of MDC values and a large number of proposals (\(n=250\)), making the trends especially clear. Aggregating within MDC cohorts reveals two clear patterns: (i) B-Privacy rises monotonically with tally perturbation; and (ii) for any fixed tally perturbation, proposals with larger MDC—i.e., requiring larger coalitions to flip outcomes—achieve higher B-Privacy. This pattern holds broadly across DAOs: adding more noise consistently raises B-Privacy (at the cost of transparency), while the effectiveness of the mechanism is fundamentally constrained by centralization, as measured by MDC. B-Privacy rises steeply with small amounts of noise but quickly plateaus, so adding more than about 10\% tally perturbation offers little additional benefit in practice. As discussed previously, most DAOs exhibit very low average MDC, which explains why relative B-Privacy remains limited even under high noise or winner-only tallying. 

\paragraph{(B3) Optimal adversarial strategy.} Figure~\ref{fig:Bribe_Distribution} shows bribe allocation across voter weights for different tally algorithms in a representative proposal, illustrating that an adversary's most effective strategy targets voters whose compliance can be verified with highest fidelity. Under the corrected noised tally, Theorem~\ref{thm:noised-bribe-margin} shows smaller voters are less attractive both due to their lower pivotality and because when voters are not pivotal, adversaries must rely on distinguishing between noise distributions to infer votes. For smaller voters, their lower weight $w_i$ means these noise distributions overlap more substantially, making it harder for adversaries to determine which choice they made. We confirm this empirically: even under full-disclosure, only the largest voters are worth bribing—in our representative ApeCoin proposal\footnote{Proposal id: \texttt{\label{ftn:proposal_id}0x5b495182b087481490a79891cfd6456ea05473451a7\\a47b0f73f306ea8c5ee40}} with 452 voters, only 29 receive bribes, even when the full-disclosure tally algorithm is used. As tally algorithms disclose less information (moving from full-disclosure to corrected noised to winner-only), this number decreases further as more voters become effectively hidden by noise. Other proposals exhibit the same pattern. 

\begin{keyFinding}[title={Key Findings}]
\begin{enumerate}[label=\textbf{(B\arabic*)}]
  \item \textbf{Tally privacy provides minimal B-Privacy in most DAOs due to large whale presence.}
  \item \textbf{More noise consistently increases B-privacy, with effectiveness increasing at higher MDC.}
  \item \textbf{Adversaries optimally target whales regardless of tally algorithm.}
\end{enumerate}
\end{keyFinding}

\begin{figure}[tp]
    \centering
    \includegraphics[width=1\linewidth]{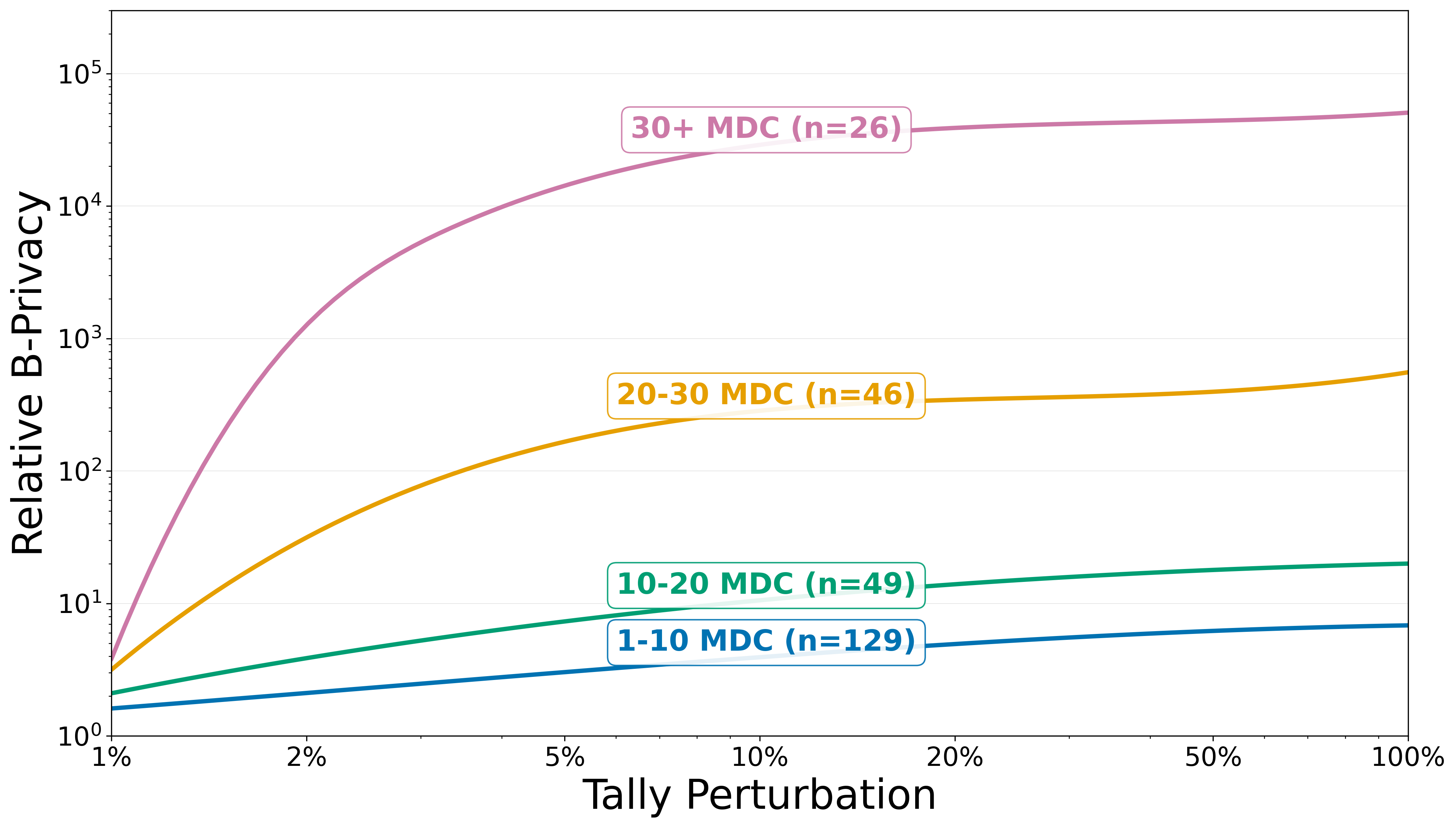}
    \caption{Relative B-Privacy as a function of tally perturbation $\talprtb$, grouped by minimum decisive coalition (MDC) size across 250 proposals from Aavegotchi~\cite{aavegotchi_website}. Each line averages the relative B-Privacy across all proposals within an MDC cohort ($n$ denotes the number of such proposals). The y-axis shows relative B-Privacy on a log scale. Relative B-Privacy rises with $\talprtb$. At any fixed $\talprtb$ larger MDC cohorts achieve higher relative B-Privacy.}
    \label{fig:B_privacy_vs_noise}
\end{figure}

\begin{figure}[tp]
    \centering
    \includegraphics[width=1\linewidth]{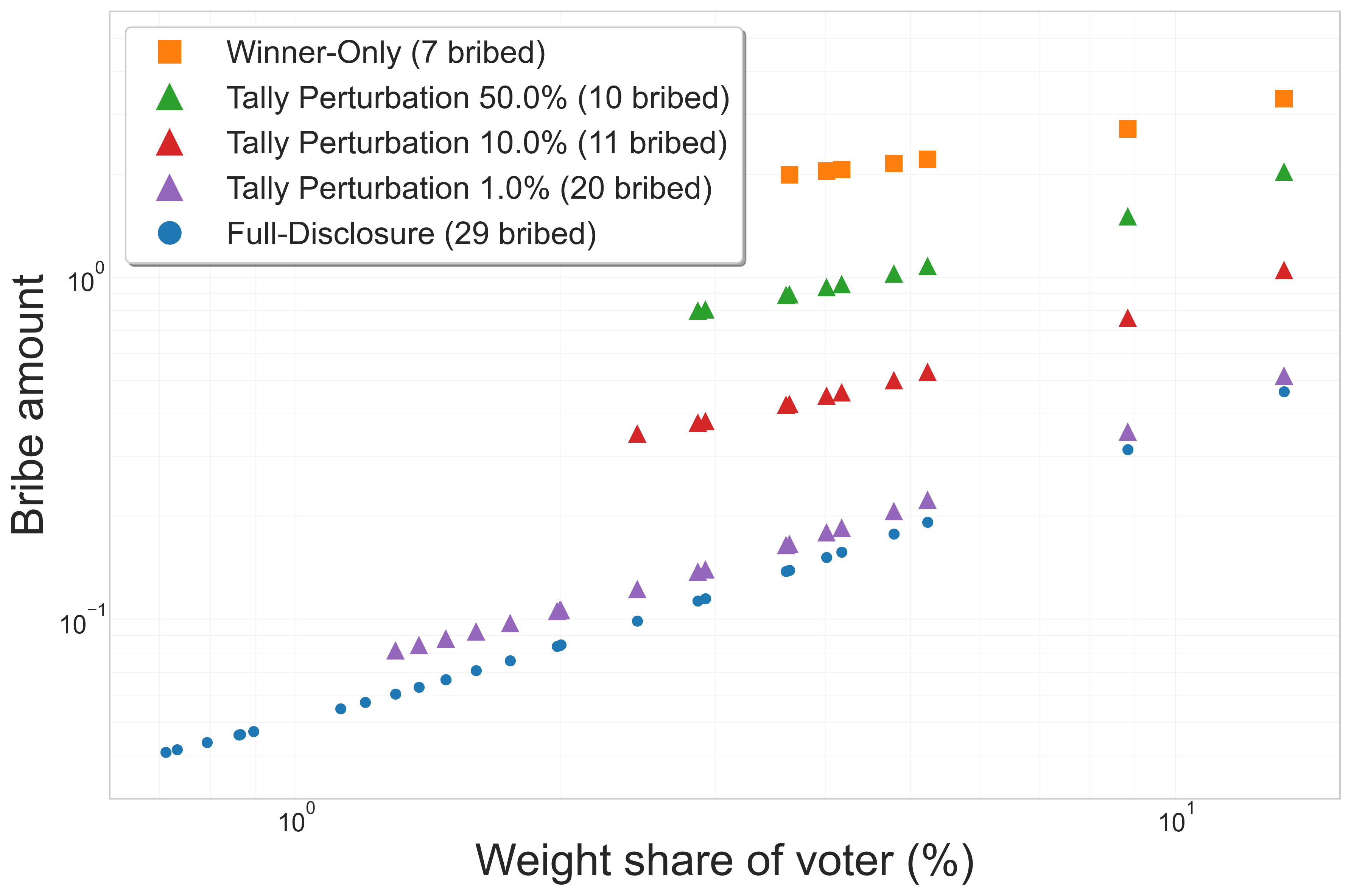}
    \caption{Optimal bribe distribution across voters for winner-only, corrected noised, and full-disclosure tally algorithms. Plotted on representative proposal \texttt{0x5b495182b08...} from ApeCoin with 452 voters and an MDC of 6. X-axis corresponds to weight shares, i.e., fraction of total weight, of individual voters. Y-axis shows bribe amount received by individual voters, measured in utility units. Where there is no dot on Y-axis for a given tally algorithm, the corresponding voter receives no bribe. As a tally algorithm discloses less information, the adversary bribes fewer voters.}
    \label{fig:Bribe_Distribution}
\end{figure}
\subsection{Summary guidance for DAOs}

In weighted voting, privacy is more than a voter-centric right—it is a structural defense against economic manipulation. Even when ballots remain hidden, precise results can leak enough information for adversaries to mount highly efficient bribery attacks. This undermines both decision integrity and community trust.

Drawing from our empirical evaluation, several practices emerge as effective in boosting B-Privacy without unduly degrading transparency:
\begin{itemize}
    \item \textbf{Apply (correct) noised tally algorithms:} Even minimal Laplacian noise, tuned to keep results within 10\% of the exact tally for most proposals, can raise B-Privacy, however the impact varies widely depending on the weight distribution and proposal dynamics.
    \item \textbf{Adjust tally perturbation for proposal sensitivity:} Apply stronger tally perturbations for high-value or contentious proposals where privacy is at a premium, less perturbation for routine decisions.
    \item \textbf{Account for weight distribution:} Skewed token holdings increase privacy risks. Our findings show that whale dominance is an even bigger threat to B-Privacy than the choice of tally algorithm. Reducing presence of whales can increase MDC across proposals such that small tally perturbations cause sharp increases in B-privacy.
    \item \textbf{Focus anti-bribery measures on whales:} Since adversaries optimally target high-weight voters who offer the best bribe margins, governance systems should prioritize monitoring and protecting whale voters. Traditional privacy mechanisms primarily shield smaller voters, but whales remain the most attractive and vulnerable targets for manipulation regardless of noise levels.
\end{itemize}

In our experiments, modest noise preserved the key benefits of transparency—e.g., enabling voters to understand the magnitude of winning margins—while increasing the adversary’s bribery cost. The results suggest that strong B-Privacy and transparency are not inherently in conflict; in the right circumstances and with careful tuning, both can be achieved in practice.

\section{Conclusion}
\label{sec:conclusion}

We have introduced B-Privacy, a new metric for privacy in the weighted-voting setting, where classical voting-privacy notions such as ballot secrecy are insufficient. B-Privacy measures the cost of bribery to an adversary induced by different choices of tally algorithm. It offers an economic lens on a key consequence of privacy loss in weighted voting: as adversaries gain more precise knowledge of voter behavior, their cost of bribery decreases, raising systemic security risks. 

Our work gives rise to a number of future research directions, among them: 

\begin{itemize}

\item \textbf{Multi-choice proposals:} 
As initial work, our B-privacy framework currently supports only binary-choice proposals. Extending to abstention (as an explicit ballot choice) raises new issues around participation cost and community dynamics. More generally, multiple ballot choices require modeling far more complex strategic coordination and bribery schemes.


\item \textbf{Analytical equilibrium characterization:} Our current approach relies on computational methods to find Bayesian Nash equilibria. Analytic bounds would provide deeper theoretical insights.

\item \textbf{Alternative governance mechanisms:} Our experimental results show how large whale presence (low MDC) in DAOs today raises the risk of bribery. Our work injects new urgency into the question of how DAOs can protect against whale dominance---a popular topic of study and community action~\cite{wang2025balancingsecurityliquiditytimeweighted,sharma2024unpacking}.


\item \textbf{Parameter exploration and modeling assumptions:} Our experiments make several simplifying assumptions that merit further investigation. We model voter utilities using identical normal distributions across voters and use Laplacian noise for tally perturbations. Future work could explore more realistic utility models that capture heterogeneity in voter preferences, investigate how different noise distributions (beyond Laplacian) affect the privacy-transparency tradeoff, and more generally examine the robustness of our results to alternative modeling choices and simulation parameters.
\end{itemize}

In summary, B-Privacy yields both theoretical results and practical guidance for DAO communities and other weighted-voting settings. With the growing reliance on weighted voting in blockchain governance and a movement toward secret-ballot systems, as well as the mounting privacy-related threats to system integrity, B-Privacy promises to serve as serve a key metric for evaluating privacy.

\paragraph{Acknowledgments.}
This work was funded by NSF CNS-2112751, generous support from IC3 industry partners and sponsors, Oasis Labs, Optimism RetroPGF, and a Sui Academic Research Award. Disclosure: Ari Juels is Chief Scientist at Chainlink Labs.




\ifACM \bibliographystyle{ACM-Reference-Format.bst} \fi
\ifUSENIX \bibliographystyle{plain} \fi
\ifIEEE \bibliographystyle{plain} \fi
\ifLNCS \bibliographystyle{Conferences/LNCS/splncs04.bst} \fi

\iffull \else \bibliography{references} \fi

\newcommand{\mpk}{\textsf{mpk}}
\newcommand{\msk}{\textsf{msk}}
\newcommand{\getsr}{{{\leftarrow{\hspace*{-3pt}\raisebox{.75pt}{$\scriptscriptstyle\$$}}}}}
\newcommand{\calM}{\mathcal{M}}

\appendix

\section{Additional Theorems and Proofs}\label{app:optimal_strategy}

This appendix contains the formal proofs of theorems presented in Sections~\ref{sec:B-privacy_comp} and \ref{sec:noise_mechanism} which cover our analytic results for computing B-privacy and bounding B-privacy for noised tally algorithms.

\subsection{Independence of bribery condition functions} \label{app:optimal_strategy_indep}

\begin{theorem}[Independence of bribery condition functions]\label{thm:independence}
For any bribery condition function $f: O \to \{0,1\}^n$ (possibly randomized and correlated across voters), there exists an equivalent independent strategy using condition functions $f_i: O \to \{0,1\}$ that achieves the same bribe margins $\bm_i$ for all voters $i$, and hence the same B-privacy.
\end{theorem}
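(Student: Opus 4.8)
The plan is to exploit the fact, already visible in Theorem~\ref{thm:adv_success_prob}, that a voter's equilibrium behavior and the adversary's success probability depend on the payment rule $f$ \emph{only through the per-voter quantities} $\bm_i$. Concretely, voter $i$'s payoff in the bribery game is their outcome utility plus the expected bribe $b_i \cdot \Pr[(f(\disclosure(t)))_i = 1 \mid \mathbf{c}]$, which is a function of the \emph{marginal} probability that coordinate $i$ of $f$ fires; correlations between coordinates of $f$ never enter any player's payoff. So replacing $f$ by a product of single-voter rules with the same coordinate marginals should yield an identical Bayesian game, and hence identical $\bm_i$ and identical $B_{\disclosure}(p)$.

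First I would define, for each voter $i$ and each outcome $o \in O$, the marginal firing probability $q_i(o) = \Pr[(f(o))_i = 1]$, where the probability is over the internal coins of $f$. I then define the independent condition function $f_i \colon O \to \{0,1\}$ to be the randomized map that on input $o$ outputs $1$ with probability $q_i(o)$, with each $f_i$ using fresh independent coins. (If one insists on deterministic $f_i$, the construction fails in general, since only certain conditional firing probabilities are then realizable; randomized $f_i$ is consistent with the game of Figure~\ref{fig:bribery_game}.)

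Second, I would check that for \emph{every} voting profile $\mathbf{c}$ — hence at any Bayesian Nash equilibrium — the conditional distribution of ``voter $i$ is paid'' is unchanged. Writing $o \sim \disclosure(\mathbf{w},\mathbf{c})$, under the original rule $\Pr[(f(\disclosure(t)))_i = 1 \mid \mathbf{c}] = \E_{o}[q_i(o)]$, and under $f_i$ the same quantity is also $\E_{o}[q_i(o)]$, by construction. Conditioning on voter $i$'s own choice $c_i = c$ and integrating over the residual randomness in the other voters' types, exactly as in Definitions~\ref{def:pivotality}--\ref{def:bribe-margin}, then gives
\[
\Pr\big[f_i(\disclosure(t)) = 1 \mid c_i = c\big] \;=\; \Pr\big[(f(\disclosure(t)))_i = 1 \mid c_i = c\big]
\]
for $c \in \{\yes,\no\}$, and subtracting the two cases shows $\bm_i$ is identical under $f$ and under $(f_1,\dots,f_n)$.

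Third, I would conclude. Since every voter's expected utility for each of their two actions, holding fixed any profile of the others, is literally the same under $f$ and under $(f_1,\dots,f_n)$ — outcome utilities untouched, expected bribes equal by the previous step — the two strategies induce the same Bayesian game, hence have the same set of equilibria, and for any fixed $\mathbf{b}$ the adversary attains the same $p_{\sf succ}$ (alternatively this is immediate from Theorem~\ref{thm:adv_success_prob}, since the $X_i$ depend on $f$ only through $\bm_i$). Thus the feasible region of the minimization in Definition~\ref{def:b-privacy} is unchanged under the restriction to independent condition functions, so $B_{\disclosure}(p)$ is unchanged. The main obstacle — really the only place needing care — is bookkeeping the two layers of randomness (the coins of $f$ and, when $\disclosure$ is randomized, the coins of $\disclosure$), so that ``same coordinate marginal'' is correctly propagated through $\disclosure$ and through the conditioning on $c_i$; once that is set up, the equivalence of the induced games is essentially a tautology.
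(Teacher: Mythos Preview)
Your proposal is correct and follows essentially the same approach as the paper: extract the per-voter marginal firing probabilities of the correlated rule, realize them with independent randomized $f_i$, and observe via Theorem~\ref{thm:adv_success_prob} that equilibrium behavior depends on $f$ only through the $\bm_i$. Your write-up is, if anything, more explicit than the paper's---you spell out $q_i(o)$ and the two layers of randomness, whereas the paper simply sets $f_i(o)=f(o)_i$ and remarks that each $f_i$ is an ``independent invocation'' of $f$---but the argument is the same.
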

\begin{proof}
Given any (possibly randomized, correlated) condition function $f$, construct independent functions by setting:
$$f_i(o) = f(o)_i \text{ for each voter } i \text{ and outcome } o.$$

Crucially, each function $f_i$ represents an independent invocation of the original correlated function $f$—not a single correlated invocation whose components are distributed across voters. This eliminates correlations while preserving the marginal payment probability for each voter.

The bribe margin for voter $i$ under either strategy is:
\begin{align*}
&\Pr[f(\disclosure(t))_i = 1 \mid c_i = \text{yes}] - \Pr[f(\disclosure(t))_i = 1 \mid c_i = \text{no}] \\
&= \Pr[f_i(\disclosure(t)) = 1 \mid c_i = \text{yes}] - \Pr[f_i(\disclosure(t)) = 1 \mid c_i = \text{no}].
\end{align*}

Since each voter's decision depends only on their individual expected payoff—which depends only on their own bribe margin and pivotality (Theorem~\ref{thm:adv_success_prob})—the transformation preserves equilibrium behavior and adversarial success probability.
\end{proof}

\subsection{Proof of optimal bribery condition functions}\label{app:optimal_strategy_bcf}

For clarity we prove the first part of Theorem \ref{thm:optimal-bcf} as a lemma:

\begin{lemma}[Optimality of maximal bribe margin]\label{lemma:advantage}
For the adversary in the bribery game, the optimal bribery condition functions are those that maximize the bribe margin $\bm_i$ for each voter $i$.
\end{lemma}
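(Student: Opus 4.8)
The goal is to show that, among all choices of bribery condition functions, the adversary minimizes total bribe cost $|\mathbf{b}|$ (for a fixed target success probability $p$) precisely by picking, for each voter $i$, a condition function $f_i$ that maximizes the bribe margin $\bm_i$. The plan is to reduce the claim to a monotonicity argument using Theorem~\ref{thm:adv_success_prob}. By Theorem~\ref{thm:independence} we may restrict attention to independent condition functions $\mathbf{f} = (f_1,\dots,f_n)$, so each $f_i$ can be chosen separately, and the only way $f_i$ enters the adversary's success probability is through the single scalar $\bm_i$.

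First I would recall from Theorem~\ref{thm:adv_success_prob} that, at the equilibrium induced by $(\mathbf{b},\mathbf{f})$, the success probability is $p_{\sf succ} = \Pr[\sum_i w_i X_i > W/2]$ with $X_i = \mathbb{I}[U_i^{\sf no} \le \bm_i b_i / \Delta_i]$. The key structural observation is that each $X_i$ is stochastically monotone in the threshold $\bm_i b_i / \Delta_i$: since $U_i^{\sf no}$ has a fixed distribution, increasing $\bm_i$ (with $b_i, \Delta_i$ held fixed) can only increase $\Pr[X_i = 1]$, hence only (weakly) increases $\Pr[X_i = 1]$ and so, by a coupling on the indicators, weakly increases $\Pr[\sum_i w_i X_i > W/2]$ since the $w_i$ are nonnegative and the event is monotone in each $X_i$. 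A subtlety worth flagging: changing $f_i$ changes the equilibrium, which in principle could change the pivotalities $\Delta_j$ and hence the other voters' thresholds; I would handle this by arguing that the adversary's optimization is over the equilibrium-consistent outcome, and that replacing each $f_i$ by a bribe-margin-maximizing $f_i^\star$ while holding $\mathbf{b}$ fixed yields an equilibrium in which every threshold is weakly larger, so $p_{\sf succ}$ weakly increases. (Formally one can appeal to monotone comparative statics / a fixed-point monotonicity argument on the vector of pivotalities, or simply note that the bound in Theorem~\ref{thm:optimal-bcf} expresses $\bm_i^\star$ purely in terms of the outcome distribution, which depends on the other voters' play but not on $f_i$ itself.)

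Given this monotonicity, the cost argument is short: suppose $(\mathbf{b},\mathbf{f})$ achieves $p_{\sf succ} \ge p$. Replace $\mathbf{f}$ by $\mathbf{f}^\star$ consisting of bribe-margin-maximizing condition functions; this weakly increases $p_{\sf succ}$, so $(\mathbf{b},\mathbf{f}^\star)$ still achieves success probability $\ge p$ at the same cost $|\mathbf{b}|$. Hence for any feasible $(\mathbf{b},\mathbf{f})$ there is a feasible $(\mathbf{b},\mathbf{f}^\star)$ with $\mathbf{f}^\star$ using maximal bribe margins and no larger cost, so the minimum in Definition~\ref{def:b-privacy} is attained with maximal-bribe-margin condition functions. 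I would then optionally note the converse direction: if $f_i$ does \emph{not} maximize $\bm_i$, then (generically, using the infinite support of $\mathbf{U}_i^{\sf no}$ so the threshold map is strictly increasing) one can strictly lower the budget $b_i$ for that voter while keeping the threshold $\bm_i b_i/\Delta_i$ fixed, strictly reducing $|\mathbf{b}|$ — establishing that maximality of every $\bm_i$ is not just sufficient but necessary for optimality.

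\textbf{Main obstacle.} The delicate point is the equilibrium feedback loop: altering one condition function perturbs the Bayesian Nash equilibrium and thus every voter's pivotality and threshold, so the "obviously monotone" argument needs care to be rigorous. I expect the cleanest route is to observe (as the statement of Theorem~\ref{thm:optimal-bcf} already hints via its formula for $\bm_i^\star$) that the optimal bribe margin depends only on the outcome-distribution conditioned on voter $i$'s own choice — a quantity governed by the other voters' equilibrium behavior, not by $f_i$ — so one can first fix an equilibrium profile of the other voters, choose $f_i^\star$ pointwise to maximize $\bm_i$ for that profile, and then check consistency of the resulting fixed point; a monotone-map / Tarski-style argument then guarantees a consistent equilibrium with weakly larger thresholds everywhere. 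The rest is routine.
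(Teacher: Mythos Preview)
Your ``optional converse'' is in fact the paper's entire proof, and it is much cleaner than your main line. The paper argues by contradiction: suppose an optimal $(\mathbf{b}^*,\mathbf{f}^*)$ has some $f_k^*$ with sub-maximal bribe margin $\bm_k$, and write the maximal margin as $\bm_k^{\max}=a\bm_k$ with $a>1$. Replace $f_k^*$ by a margin-maximizing $f_k'$ \emph{and simultaneously} scale $b_k' = b_k^*/a$. Then the product $\bm_k b_k$ is preserved, so voter $k$'s threshold $\bm_k b_k/\Delta_k$ is unchanged; since other voters' best responses do not depend on $(f_k,b_k)$ except through voter $k$'s behavior, the original equilibrium remains an equilibrium verbatim. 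Same success probability, strictly smaller budget---contradiction. The point is that the rescaling trick makes the equilibrium \emph{exactly} invariant, so the feedback loop you flag never arises.

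Your main route, by contrast, changes $\bm_i$ while holding $b_i$ fixed, which genuinely perturbs voter $i$'s threshold and hence the whole equilibrium. You are right that this is the delicate point, but the proposed Tarski/monotone-comparative-statics patch is not routine: the map from thresholds to pivotalities is not monotone in general (making one voter more likely to vote $\mathsf{yes}$ can raise or lower another voter's pivotality depending on the weight profile), and the optimal bribe margin $\bm_i^\star$ is itself an equilibrium quantity that shifts with the perturbation. So the claim ``replacing $\mathbf f$ by $\mathbf f^\star$ weakly increases $p_{\sf succ}$'' is not established by your outline. Promote your converse argument to the main proof and drop the monotonicity detour: the scaling $b_k\mapsto b_k/a$ is the idea that makes the obstacle disappear.
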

\begin{proof}
Suppose for contradiction that there exists an optimal solution $(\mathbf{b}^*, \mathbf{f}^*)$ with some condition function $f_k^*$ that does not maximize voter $k$'s bribe margin. Let $\bm_k$ denote the bribe margin of $f_k^*$, and let $\bm_k^{\max} > \bm_k$ be the maximal achievable bribe margin for voter $k$.

Since $\bm_k^{\max} > \bm_k$, we can write $\bm_k^{\max} = a \bm_k$ for some $a > 1$. 

Consider an alternative strategy $(\mathbf{b}', \mathbf{f}')$ where:
\begin{itemize}
    \item $f_i' = f_i^*$ for all $i \neq k$
    \item $f_k'$ achieves the maximal bribe margin $\bm_k^{\max}$
    \item $b_i' = b_i^*$ for all $i \neq k$
    \item $b_k' = \frac{b_k^*}{a}$
\end{itemize}

By Theorem \ref{thm:adv_success_prob}, voter $k$'s expected payoff from accepting the bribe is:
$$\bm_k \cdot b_k^* = a \bm_k \cdot \frac{b_k^*}{a} = \bm_k^{\max} \cdot b_k'.$$

Since the expected payoff is unchanged, voter $k$'s equilibrium behavior remains the same. The behavior of all other voters is also unchanged, so the success probability remains $p$.

However, the total bribe budget decreases:
$$\sum_{i=1}^n b_i' = \sum_{i \neq k} b_i^* + b_k' < \sum_{i \neq k} b_i^* + b_k^* = \sum_{i=1}^n b_i^*.$$

This contradicts the assumption that $(\mathbf{b}^*, \mathbf{f}^*)$ was optimal.
\end{proof}

We now proceed to considering the form of the optimal bribery condition functions:

\begin{proof}[Proof of Theorem \ref{thm:optimal-bcf}]
By Lemma \ref{lemma:advantage}, the optimal condition function for voter $i$ is the one that maximizes the bribe margin $\bm_i$. We now characterize this function.

Define $p_{i,c}^{o} = \Pr[\disclosure(t) = o \mid c_i = c]$. For any condition function $f_i: O \to \{0,1\}$, consider the definition of the bribe margin:
\begin{align*}
\bm_i &= \Pr[f_i(\disclosure(t)) = 1 \mid c_i = \mathsf{yes}] - \Pr[f_i(\disclosure(t)) = 1 \mid c_i = \mathsf{no}] \\
&= \sum_{o \in O} f_i(o) p_{i,\mathsf{yes}}^{o} - \sum_{o \in O} f_i(o) p_{i,\mathsf{no}}^{o} \\
&= \sum_{o \in O} f_i(o) \left(p_{i,\mathsf{yes}}^{o} - p_{i,\mathsf{no}}^{o}\right).
\end{align*}

Since $f_i(o) \in \{0,1\}$, this sum is maximized by setting $f_i(o) = 1$ if and only if the term in parentheses is non-negative, which gives:
$$f^*_i(o) = \mathbb{I}\{p_{i,\mathsf{yes}}^{o} \geq p_{i,\mathsf{no}}^{o}\}.$$

This yields the optimal bribe margin:
$$\bm_i^* = \sum_{o \in O} \max(p_{i,\mathsf{yes}}^{o} - p_{i,\mathsf{no}}^{o}, 0).$$
\end{proof}

\newcounter{savedtheorem}
\setcounter{savedtheorem}{\value{theorem}}
\setcounter{theorem}{7}
\begin{corollary}\label{crly:std_cases}
The optimal bribery condition functions for standard tally algorithms are (where $o=\disclosure(t)$ denotes the outcome):
\begin{itemize}
    \item $\disclosure_{\mathsf{winner}}$: $f^*_i(o) = \mathbb{I}\{o = \mathsf{yes}\}$ with bribe margin $\bm^*_i=\Delta_i$
    
    \item $\disclosure_{\mathsf{public}}$: $f^*_i(o) = \mathbb{I}\{o_i = \mathsf{yes}\}$ with bribe margin $\bm^*_i =1$
\end{itemize}
\end{corollary}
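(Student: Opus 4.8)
\textbf{Proof plan for Corollary~\ref{crly:std_cases}.}
The plan is to apply Theorem~\ref{thm:optimal-bcf} directly to each of the two tally algorithms, computing the quantities $p_{i,c}^o = \Pr[\disclosure(t) = o \mid c_i = c]$ in each case and then reading off $f_i^*$ and $\bm_i^*$ from the formulas $f_i^*(o) = \mathbb{I}\{p_{i,\mathsf{yes}}^o \geq p_{i,\mathsf{no}}^o\}$ and $\bm_i^* = \sum_{o} \max(p_{i,\mathsf{yes}}^o - p_{i,\mathsf{no}}^o, 0)$. The outcome spaces are small and discrete in both cases, so this is mostly bookkeeping.

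For $\disclosure_{\mathsf{winner}}$, the outcome space is $O = \{\mathsf{yes}, \mathsf{no}\}$. First I would observe that $p_{i,\mathsf{no}}^{\mathsf{yes}} = \Pr[\mathsf{yes}\text{ wins} \mid c_i = \mathsf{no}]$ and $p_{i,\mathsf{yes}}^{\mathsf{yes}} = \Pr[\mathsf{yes}\text{ wins} \mid c_i = \mathsf{yes}]$. Using $\Pr[\mathsf{yes}\text{ wins}] = 1 - \Pr[\mathsf{no}\text{ wins}]$, the difference $p_{i,\mathsf{yes}}^{\mathsf{yes}} - p_{i,\mathsf{no}}^{\mathsf{yes}} = \Pr[\mathsf{no}\text{ wins} \mid c_i = \mathsf{no}] - \Pr[\mathsf{no}\text{ wins} \mid c_i = \mathsf{yes}] = \Delta_i$, which is nonnegative (a voter switching to $\mathsf{no}$ can only help $\mathsf{no}$), so $f_i^*(\mathsf{yes}) = 1$. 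Symmetrically, for $o = \mathsf{no}$ the difference is $-\Delta_i \leq 0$, so $f_i^*(\mathsf{no}) = 0$; hence $f_i^*(o) = \mathbb{I}\{o = \mathsf{yes}\}$. Summing $\max(\cdot, 0)$ over the two outcomes gives $\bm_i^* = \Delta_i + 0 = \Delta_i$.

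For $\disclosure_{\mathsf{public}}$, the outcome is the full transcript $t = (\mathbf{w}, \mathbf{c})$, so an outcome $o$ directly exposes voter $i$'s choice $o_i = c_i$. Then $p_{i,\mathsf{yes}}^o = 0$ for any $o$ with $o_i = \mathsf{no}$ and $p_{i,\mathsf{no}}^o = 0$ for any $o$ with $o_i = \mathsf{yes}$, so the supports of the two conditional distributions over outcomes are disjoint. Thus $f_i^*(o) = \mathbb{I}\{o_i = \mathsf{yes}\}$, and $\bm_i^* = \sum_{o: o_i = \mathsf{yes}} p_{i,\mathsf{yes}}^o = 1$ since these probabilities sum to one over the full conditional distribution given $c_i = \mathsf{yes}$.

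I do not anticipate a serious obstacle; the only point requiring a small argument is the monotonicity fact that $\Delta_i \geq 0$ (so the sign of the difference is unambiguous and $f_i^*$ takes the stated clean form), which follows because weights are positive and switching voter $i$'s ballot from $\mathsf{yes}$ to $\mathsf{no}$ weakly increases the $\mathsf{no}$ tally and hence weakly increases the probability that $\mathsf{no}$ wins, holding all other voters' behavior fixed at equilibrium. I would state this monotonicity explicitly as the one nontrivial ingredient and then present both computations as short case analyses.
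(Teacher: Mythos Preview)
Your proposal is correct and follows essentially the same route as the paper: apply Theorem~\ref{thm:optimal-bcf} to each tally algorithm, verify the sign of $p_{i,\mathsf{yes}}^o - p_{i,\mathsf{no}}^o$ via the monotonicity observation $\Delta_i \ge 0$, and read off $f_i^*$ and $\bm_i^*$. Your handling of $\disclosure_{\mathsf{public}}$ via disjoint supports is slightly more explicit than the paper's, but the argument is the same.
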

\setcounter{theorem}{\value{savedtheorem}}
\begin{proof}
We apply the optimal condition function form to each tally algorithm:

\medskip\noindent\textbf{(1) Winner-only algorithm.}
Here $\disclosure(t) \in \{\mathsf{yes},\mathsf{no}\}$ reveals only the winning choice. Since voting $\mathsf{yes}$ cannot make a $\mathsf{yes}$ outcome less likely, we have:
\[
\Pr[\disclosure(t)=\mathsf{yes} \mid c_i=\mathsf{yes}] \geq \Pr[\disclosure(t)=\mathsf{yes} \mid c_i=\mathsf{no}].
\]

By Theorem \ref{thm:optimal-bcf}, the optimal condition function is $f_i(o) = \mathbb{I}\{o = \mathsf{yes}\}$.

The bribe margin is:
\begin{align*}
\bm_i &= \Pr[\mathsf{yes} \text{ wins} \mid c_i = \mathsf{yes}] - \Pr[\mathsf{yes} \text{ wins} \mid c_i = \mathsf{no}] \\
&= \Pr[\mathsf{no} \text{ wins} \mid c_i = \mathsf{no}] - \Pr[\mathsf{no} \text{ wins} \mid c_i = \mathsf{yes}] \\
&= \Delta_i.
\end{align*}

\medskip\noindent\textbf{(2) Full-disclosure algorithm.}
Here $\disclosure(t) \in \{\mathsf{yes},\mathsf{no}\}^n$ reveals all individual votes. Since $\disclosure(t)_i = c_i$ always:
\[
\begin{aligned}
\Pr[\disclosure(t)_i=\mathsf{yes} \mid c_i=\mathsf{yes}] &= 1 \\ \geq \Pr[\disclosure(t)_i=\mathsf{yes} \mid c_i=\mathsf{no}] &= 0.
\end{aligned}
\]

By Theorem \ref{thm:optimal-bcf}, the optimal condition function is $f_i(o) = \mathbb{I}\{o_i = \mathsf{yes}\}$ with bribe margin $\bm_i = 1$.
\end{proof}

\subsection{Bound on Plausible Deniability}\label{app:bound_PD}

\begin{proof}[Proof of Theorem \ref{lem:bribe-margin-epd}]

 Let $p_{i,c}^{o} = \Pr[\disclosure(t)=o| c_i = c]$ and $p_{o}^{i,c} = \Pr[c_i = c| \disclosure(t)=o]$. Using Bayes' rule we have
    \[
        p_{i,c}^{o} = \frac{p^{i,c}_{o}\Pr[\disclosure(t)=o]}{\Pr[c_i=c]}.
    \]

    We also have that
    \[
        \max(p_{i,\yes}^o - p_{i,\no}^o, 0) + \min(p_{i,\yes}^o, p_{i,\no}^o) = p_{i,\yes}^o.
    \]
    
    Then we can finally claim
    \begin{align*}
    \bm_i^*
        &= \sum_{o\in O} \max(p_{i,\yes}^o - p_{i,\no}^o, 0) \\
        &= \sum_{o\in O} p_{i,\yes}^o - \sum_{o\in O} \min\left(p_{i,\yes}^o, p_{i,\no}^o\right)\\
        &=\sum_{o\in O} \frac{p^{i,\yes}_{o}}{\Pr[c_i=\yes]}\Pr[\disclosure(t)=o] \\&\quad - \sum_{o\in O} \min\left(\frac{p^{i,\yes}_{o}}{\Pr[c_i=\yes]},
              \frac{p^{i,\no}_{o}}{\Pr[c_i=\no]}\right)\Pr[\disclosure(t)=o]\\
        &=1 - EPD_i^{\disclosure}.        
    \end{align*}
\end{proof}

\subsection{Differentially private tally algorithm}
\label{app:dp-disclosure}

Consider the following noised tally algorithm for binary proposals that uses Laplace noise. We write $Lap(b)$ to denote the Laplace distribution with scale $b$ and use $w_{\max}=\max_i w_i$:
\[
\disclosure_{\mathsf{noised}\left(Lap(w_{max}/\cmepsilon)\right)} = Y+\sum_{i: c_i = \mathsf{yes}} w_i, \quad Y \stackrel{\$}{\leftarrow}  Lap(w_{max}/\cmepsilon).
\]

For brevity, we denote this tally algorithm as $\disclosure_{\mathsf{dp}}$.

\begin{theorem}
    When the noised tally algorithm $\disclosure_{\mathsf{dp}}$ is used the optimal bribery condition function for any voter has bribe margin at most $1-e^{-\cmepsilon}$.
\end{theorem}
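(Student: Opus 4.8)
The plan is to reduce the statement to a standard differential-privacy-style argument, using the characterization of the optimal bribe margin from Theorem~\ref{thm:optimal-bcf}. Since $\disclosure_{\mathsf{dp}}$ produces a continuous output, the discrete sum over outcomes in that theorem becomes an integral of density differences: writing $g_{\yes}$ and $g_{\no}$ for the densities of $\disclosure_{\mathsf{dp}}(t)$ conditioned on $c_i=\yes$ and on $c_i=\no$ respectively, we have $\bm_i^* = \int_{-\infty}^{\infty}\max(g_{\yes}(o)-g_{\no}(o),0)\,do$. So it suffices to bound this total-variation-type quantity by $1-e^{-\cmepsilon}$.

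First I would identify $g_{\yes}$ and $g_{\no}$ explicitly. Let $S_{-i}=\sum_{j\ne i:\,c_j=\yes}w_j$ be the (random) total weight of the other yes-voters at the Bayesian Nash equilibrium, and let $f_Z$ be the density of $Z\sim Lap(w_{\max}/\cmepsilon)$ with scale $b:=w_{\max}/\cmepsilon$. Because the utilities $U_j^{\no}$ are drawn independently and, by Theorem~\ref{thm:adv_success_prob}, each voter's equilibrium action is a deterministic threshold function of their own utility alone (and the adversary's $\mathbf{b},\mathbf{f}$ are fixed in advance), conditioning on $c_i$ does not change the distribution of $S_{-i}$, nor of the fresh tally noise $Z$. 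Hence, setting $h(o):=\mathbb{E}_{S_{-i}}[f_Z(o-S_{-i})]$, we get $g_{\no}=h$ and $g_{\yes}(o)=h(o-w_i)$ — conditioning on $c_i=\yes$ just shifts the output by $w_i$.

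Next I would establish the pointwise bound $h(o-w_i)\le e^{\cmepsilon}h(o)$ for every $o$. For fixed $s$, the triangle inequality gives $f_Z(o-w_i-s)/f_Z(o-s)=e^{(|o-s|-|o-w_i-s|)/b}\le e^{w_i/b}$, and since $w_i\le w_{\max}$ this is at most $e^{\cmepsilon}$; averaging over $S_{-i}$ yields the claim. Then, letting $A=\{o:h(o-w_i)>h(o)\}$ be the region where the integrand is positive, and using $h(o)\ge e^{-\cmepsilon}h(o-w_i)$ on $A$,
\[
\bm_i^* = \int_A h(o-w_i)\,do - \int_A h(o)\,do \le (1-e^{-\cmepsilon})\int_A h(o-w_i)\,do \le 1-e^{-\cmepsilon},
\]
where the last inequality holds because $h(\cdot-w_i)$ is a probability density and so integrates to at most $1$ over any set. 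This gives the bound for the optimal (hence any) bribery condition function.

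The main obstacle is bookkeeping rather than depth: one must (i) justify the continuous reformulation of Theorem~\ref{thm:optimal-bcf}, so the sum over discrete outcomes is legitimately replaced by an integral of densities (as is already implicit in the statement of Theorem~\ref{thm:noised-bribe-margin}), and (ii) argue cleanly that conditioning on $c_i$ leaves the equilibrium distribution of the remaining voters' choices — and therefore $S_{-i}$ — unchanged, which is exactly where independence of types and the threshold form of equilibrium play from Theorem~\ref{thm:adv_success_prob} are needed. Once those two points are in place, the remainder is the textbook Laplace-mechanism calculation, and the extra subtlety of corrected tallies (which breaks DP) simply does not arise here because $\disclosure_{\mathsf{dp}}$ is uncorrected.
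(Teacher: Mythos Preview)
Your proof is correct. It reaches the same bound as the paper but by a somewhat different route.

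The paper's argument is framed in the language of differential privacy: it observes that the $\ell_1$-sensitivity of the yes-weight sum is $w_{\max}$, so $\disclosure_{\mathsf{dp}}$ is the Laplace mechanism and is $\cmepsilon$-DP; it then invokes post-processing to push the ratio bound through an arbitrary condition function $f_i$, obtaining $P_{\yes}\le e^{\cmepsilon}P_{\no}$ for $P_c=\Pr[f_i(\disclosure(t))=1\mid c_i=c]$, and finishes with a two-case analysis (on whether $e^{\cmepsilon}P_{\no}<1$) to conclude $P_{\yes}-P_{\no}\le 1-e^{-\cmepsilon}$.

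You instead go straight to the optimal bribe margin via Theorem~\ref{thm:optimal-bcf}, identify the two conditional output densities as a $w_i$-shifted pair of Laplace mixtures, derive the pointwise ratio $h(o-w_i)\le e^{\cmepsilon}h(o)$ by hand from the Laplace density, and bound the total-variation-type integral directly. This is more elementary (no DP black boxes, no post-processing lemma, no case split) and arguably cleaner; the paper's version, in exchange, makes the connection to differential privacy explicit, which is the narrative point of the subsection. Your bookkeeping remarks about (i) the continuous reformulation of Theorem~\ref{thm:optimal-bcf} and (ii) the independence of $S_{-i}$ from $c_i$ are exactly the points that need to be checked, and your justification of (ii) via independent types and the threshold equilibrium from Theorem~\ref{thm:adv_success_prob} is correct.
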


\begin{proof}
    
    For any two adjacent voting transcripts $t,t'$ (differing in one voter's choice) we calculate the maximum possible difference in $\sum_{i: c_i = \mathsf{yes}} w_i$. We have $\max|\sum_{i: c_i = \mathsf{yes}} w_i-\sum_{i: c'_i = \mathsf{yes}} w_i|=w_{\max}$ so the $\ell_1$ sensitivity of this sum is $w_{\max}$. Then the noised tally algorithm $\disclosure_{\mathsf{dp}}$ is clearly the Laplace mechanism applied to this sum, which is $\cmepsilon$-differentially private \cite{dwork2006calibrating}.

    By the definition of differential privacy, for any outcome $o$ and adjacent transcripts $t,t'$ differing only in voter $i$'s choice:

    \[
    \frac{\Pr\left[\disclosure_{\mathsf{dp}}(t)=o\right]}{\Pr\left[\disclosure_{\mathsf{dp}}(t')=o\right]} \leq e^\cmepsilon.
    \]
    
    Since adjacent transcripts differing in voter $i$'s choice correspond exactly to $c_i = \mathsf{yes}$ versus $c_i = \mathsf{no}$, this gives us:
    
    \[
    \frac{p_{i,\mathsf{yes}}^{o}}{p_{i,\mathsf{no}}^{o}} = \frac{\Pr[\disclosure(t) = o \mid c_i = \mathsf{yes}]}{\Pr[\disclosure(t) = o \mid c_i = \mathsf{no}]} \leq e^\cmepsilon.
    \]
    
    By the post-processing property of differential privacy, applying any function (including the condition function $f_i$) cannot increase the privacy loss, so:
    
    \[
    \frac{\Pr\left[f_i(\disclosure(t))=1|c_i=\mathsf{yes}\right]}{\Pr\left[f_i(\disclosure(t))=1|c_i=\mathsf{no}\right]} \leq e^\cmepsilon.
    \]
    
   Let $P_{\mathsf{yes}} = \Pr[f_i(\disclosure(t))=1|c_i=\mathsf{yes}]$ and $P_{\mathsf{no}} = \Pr[f_i(\disclosure(t))=1|c_i=\mathsf{no}]$. Since these are probabilities, we have $P_{\mathsf{yes}} = \min(e^\cmepsilon P_{\mathsf{no}}, 1)$. The bribe margin is $\bm_i = P_{\mathsf{yes}} - P_{\mathsf{no}}$. We consider two cases:

    Case 1: $e^\cmepsilon P_{\mathsf{no}} < 1$, so $P_{\mathsf{no}} < e^{-\cmepsilon}$:
    \begin{align*}
    \bm_i &= P_{\mathsf{yes}} - P_{\mathsf{no}} \leq e^{\cmepsilon}P_{\mathsf{no}} - P_{\mathsf{no}}\\
    &= P_{\mathsf{no}}(e^\cmepsilon-1) \leq e^{-\cmepsilon}(e^\cmepsilon-1) = 1-e^{-\cmepsilon}.
    \end{align*}
    
    Case 2: $e^\cmepsilon P_{\mathsf{no}} \geq 1$, so $P_{\mathsf{no}} \geq e^{-\cmepsilon}$:
    \begin{align*}
    \bm_i &= P_{\mathsf{yes}} - P_{\mathsf{no}} \leq 1 - P_{\mathsf{no}} \leq 1- e^{-\cmepsilon}.
    \end{align*}
    
    Therefore the maximum bribe margin is $1-e^{-\cmepsilon}$.
\end{proof}

\subsection{Bound on corrected noised tally bribe margin}\label{app:optimal_strategy_noisy}

\begin{proof}[Proof of Theorem \ref{thm:noised-bribe-margin}]

    We consider the tally algorithm $\disclosure_{\mathsf{noised}(\nu)+}(t)=(\disclosure_{\mathsf{noised}(\nu)}(t), \disclosure_{\mathsf{winner}}(t)).$

        Throughout this proof we abbreviate notation, writing $\disclosure_{\mathsf{n+}}$ for $\disclosure_{\mathsf{noised}(\nu)+}$, $\disclosure_{\mathsf{n}}$ for $\disclosure_{\mathsf{noised}(\nu)}$, $\disclosure_{\mathsf{w}}$ for $\disclosure_{\mathsf{winner}}$, and $\Pr[\disclosure=o \mid c]$ for $\Pr[\disclosure(t) = o \mid c_i = c]$.
    
    By Theorem \ref{thm:optimal-bcf}, for any voter $i$ when using the optimal bribery condition function, the bribe margin is:
    \begin{align*}
     \bm^*_i&=\sum_{o \in O}\max\left(\Pr\left[\disclosure_{\mathsf{n+}}=o \mid \mathsf{yes}\right] - \Pr\left[\disclosure_{\mathsf{n+}}=o \mid \mathsf{no}\right], 0\right).
    \end{align*}
    
    Since the combined outcome pairs $(o_1, o_2)$ where $o_1$ is the noised tally and $o_2 \in \{\mathsf{yes}, \mathsf{no}\}$ is the winner, we decompose the sum over all possible outcomes:
    \begin{align*}
     \bm^*_i &= \sum_{o_2\in\{\mathsf{yes}, \mathsf{no}\}}\int_{-\infty}^{\infty}\max(\Pr\left[\disclosure_{\mathsf{n}}=o_1, \disclosure_{\mathsf{w}}=o_2\mid \mathsf{yes}\right]\\&-\Pr\left[\disclosure_{\mathsf{n}}=o_1, \disclosure_{\mathsf{w}}=o_2\mid \mathsf{no}\right],0)\,do_1.
    \end{align*}

    Let $\mathbb{T}$ be the set of all possible sums of weights of voters that choose $\mathsf{yes}$ excluding voter $i$, and $T$ be the random variable for this value with probability taken over all other voters' choices. Let $Z \sim \nu$ be the noise random variable. Define the winner function:
    $$\mathsf{winner}(x)=\begin{cases}
        \mathsf{yes} & \text{if } x \ge \frac{W}{2}\\
        \mathsf{no} & \text{otherwise.}
    \end{cases}$$
    
    We condition on the partial tally $T$ to decompose each probability. When voter $i$ votes $\mathsf{yes}$, the total $\mathsf{yes}$ tally is $s + w_i$; when voting $\mathsf{no}$, it is $s$:
    \begin{align*}
        &\Pr\left[\disclosure_{\mathsf{n}}=o_1, \disclosure_{\mathsf{w}}=o_2\mid\mathsf{yes}\right] \\
        &=\sum_{s \in \mathbb{T}}\Pr[T=s]\Pr\left[Z=o_1-(s+w_i)\right] \cdot \mathbb{I}\left\{o_2 =\mathsf{winner}(s+w_i)\right\}.\\\\
        &\Pr\left[\disclosure_{\mathsf{n}}=o_1, \disclosure_{\mathsf{w}}=o_2\mid\mathsf{no}\right] \\
        &=\sum_{s \in \mathbb{T}}\Pr[T=s]\Pr\left[Z=o_1-s\right] \cdot \mathbb{I}\left\{o_2 =\mathsf{winner}(s)\right\}.
    \end{align*}
    
    For clarity, define:
    \begin{align*}
    P_y(s, o_1, o_2) &= \Pr[Z=o_1-(s+w_i)] \cdot \mathbb{I}\{o_2 = \mathsf{winner}(s+w_i)\}, \\
    P_n(s, o_1, o_2) &= \Pr[Z=o_1-s] \cdot \mathbb{I}\{o_2 = \mathsf{winner}(s)\}.
    \end{align*}

    For brevity, we omit the arguments $(s, o_1, o_2)$ when the context is clear.
    
    Substituting back into the bribe margin calculation and applying the inequality $\max\left(\sum_{s}f(s),0\right) \leq \sum_{s}\max\left(f(s), 0\right)$:
    \begin{align*}
        \bm^*_i&= \sum_{o_2\in\{\mathsf{yes}, \mathsf{no}\}}\int_{-\infty}^{\infty}\max \left(\sum_{s \in \mathbb{T}}(P_y-P_n)\Pr\left[T=s\right], 0\right)\,do_1 \\
        & \leq \sum_{o_2\in\{\mathsf{yes}, \mathsf{no}\}}\int_{-\infty}^{\infty} \sum_{s \in \mathbb{T}} \max\left((P_y-P_n)\Pr\left[T=s\right], 0\right)\,do_1 \\
        &= \sum_{o_2\in\{\mathsf{yes}, \mathsf{no}\}}\int_{-\infty}^{\infty} \sum_{s \in \mathbb{T}} \max \left(P_y-P_n, 0\right)\Pr\left[T=s\right]\,do_1 \\
        &= \sum_{s \in \mathbb{T}}\Pr\left[T=s\right]\sum_{o_2\in\{\mathsf{yes}, \mathsf{no}\}}\int_{-\infty}^{\infty} \max \left(P_y-P_n, 0\right)\,do_1. 
    \end{align*}

    We now decompose the sum over partial tallies based on whether voter $i$ is pivotal.
    
    First consider the partial tallies in which voter $i$ is pivotal. Let $\mathbb{T}^{\mathsf{pivotal}} = \{s \in \mathbb{T} : \frac{W}{2}-w_i < s < \frac{W}{2}\}$. For $s \in \mathbb{T}^{\mathsf{pivotal}}$, voter $i$'s choice determines the winner: $\mathsf{winner}(s+w_i) \neq \mathsf{winner}(s)$. This means that for any fixed $o_2$, exactly one of the indicators $\mathbb{I}\{o_2 = \mathsf{winner}(s+w_i)\}$ or $\mathbb{I}\{o_2 = \mathsf{winner}(s)\}$ equals 1, so exactly one of $P_y$ and $P_n$ is nonzero. Therefore $\max(P_y-P_n,0) = P_y$, and we have:
    \begin{align*}
     & \sum_{s \in \mathbb{T}^{\mathsf{pivotal}}} \Pr[T=s]\sum_{o_2\in\{\mathsf{yes}, \mathsf{no}\}}\int_{-\infty}^{\infty} \max (P_y-P_n, 0)\,do_1 \\ 
     &= \sum_{s \in \mathbb{T}^{\mathsf{pivotal}}} \Pr[T=s]\sum_{o_2\in\{\mathsf{yes}, \mathsf{no}\}}\int_{-\infty}^{\infty}P_y\,do_1 \\
     &= \sum_{s \in \mathbb{T}^{\mathsf{pivotal}}} \Pr[T=s] \\
     &= \Pr[\text{voter $i$ is pivotal}] = \Delta_i.
    \end{align*}

    Now consider the partial tallies in which voter $i$ is not pivotal. In these cases, $\mathsf{winner}(s+w_i) = \mathsf{winner}(s)$, so for any fixed $o_2$, the indicators $\mathbb{I}\{o_2 = \mathsf{winner}(s+w_i)\}$ and $\mathbb{I}\{o_2 = \mathsf{winner}(s)\}$ are either both 1 or both 0. We need only consider the case where both equal 1, allowing us to drop the sum over $o_2$:
    
    Let $q(u) = \Pr[Z=u-w_i] - \Pr[Z=u]$. Then:
    \begin{align*}
    & \sum_{s \in \mathbb{T} \setminus \mathbb{T}^{\mathsf{pivotal}}}\Pr[T=s]\sum_{o_2\in\{\mathsf{yes}, \mathsf{no}\}}\int_{-\infty}^{\infty} \max (P_y-P_n, 0)\,do_1 \\
    &= \sum_{s \in \mathbb{T} \setminus \mathbb{T}^{\mathsf{pivotal}}}\Pr[T=s]\int_{-\infty}^{\infty} \max (q(o_1-s), 0)\,do_1 \\
    &= \sum_{s \in \mathbb{T} \setminus \mathbb{T}^{\mathsf{pivotal}}}\Pr[T=s]\int_{-\infty}^{\infty} \max (q(u), 0)\,du \\
    &= (1-\Delta_i)\int_{-\infty}^{\infty} \max (q(u), 0)\,du.
    \end{align*}

    Combining both sets of partial tallies and substituting back $q(u) = \Pr[Z=u-w_i] - \Pr[Z=u]$ completes the proof:
    \begin{align*}
        \bm^*_i &\leq \Delta_i + (1-\Delta_i)\int_{-\infty}^{\infty} \max\left(\Pr[Z=u-w_i] - \Pr[Z=u], 0 \right)du. 
    \end{align*}
\end{proof}

\begin{table*}[h]
\centering
\begin{tabular}{ll}
\toprule
\textbf{Bribery Allocation Strategy} & \textbf{Bribe Distribution Formula} \\
\midrule
Equal Split & $b_i = B/n$ \\
\midrule
Linear & $b_i \propto w_i$ \\
\midrule
Square-Root & $b_i \propto \sqrt{w_i}$ \\
\midrule
Quadratic & $b_i \propto w_i^2$ \\
\midrule
Logarithmic & $b_i \propto \log(w_i)$ \\
\midrule
Linear Sloped$(s)$ & $b_i = sw_i + c$ \\
\bottomrule
\end{tabular}
\caption{Bribe allocation strategies considered. All strategies target only voters opposing the adversary's preferred outcome, with bribes normalized to satisfy budget constraint $\sum_i b_i = B$. In the bribe distribution formulas $n$ represents the number of voters, $s$ represents the slope and $c$ is chosen so that the allocation satisfies the budget constraint.}
\label{tab:experimental_alloc_func}
\end{table*}

\section{Computational Methods for B-Privacy}
\label{app:comp-methods}


This appendix details the methods used to compute B-privacy values $B_{\disclosure}(p)$ for a tally algorithm $\disclosure$ at target success probability $p$. As discussed in Section~\ref{sec:adv_strategy}, this requires solving two interconnected problems: determining the Bayesian Nash equilibrium and finding optimal bribe allocation. We address the first problem using fixed-point iteration to approximate equilibrium behavior, and the second by testing reasonable heuristic allocation strategies rather than attempting to solve the complex coupled optimization problem.

For each proposal, the procedure returns the minimal total bribe $B^*$ and an associated per-voter bribe vector $\mathbf{b}^*$ such that $p_{{\sf succ}} \geq p$ 

\paragraph{Inputs and utility model.}
Let $\mathbf{w}=(w_1,\cdots,w_n)$ be voter weights, $W=\sum_i w_i$ the total voting weight and $\mathbf{c} = (c_1, \cdots, c_n)$ the observed voter choices on a given proposal.

For computational tractability, in our experiments, we model each voter $i$'s utility $U_i^{\sf no}$ as Gaussian, although other distributions could be used. We set the prior based on their observed vote: voters who voted for the winning choice are modeled as having higher utility for ``no'' ($\mu_i = +1$), while voters who voted for the preferred outcome have $\mu_i = -1$. Formally:
\[
U_i^{\sf no} \sim \mathcal{N}(\mu_i,\sigma^2), \quad \mu_i = \begin{cases} +1 & \text{if $c_i = \textsf{winner}$},\\ -1 & \text{otherwise}, \end{cases} \qquad \sigma=1.
\]

This models the intuition that voters who voted against the adversary's preference (which we always model as the losing side) likely have higher intrinsic utility for that outcome.

\paragraph{Tally algorithm-dependent bribe margins.}

For each tally algorithm $\disclosure$, we compute per-voter bribe margins $\bm_i$ as follows:

\begin{itemize}
\item \textbf{Full-disclosure ($\disclosure_{\mathsf{public}}$):} $\bm_i = 1$, using the exact bribe margin from Corollary~\ref{crly:std_cases}
\item \textbf{Winner-only ($\disclosure_{\mathsf{winner}}$):} $\bm_i = \Delta_i$, using the exact bribe margin from Corollary~\ref{crly:std_cases}
\item \textbf{Corrected noised tally ($\disclosure_{\mathsf{noised}(\nu)+}$):}
  \[
  \bm_i = \Delta_i + (1-\Delta_i) \cdot \mathrm{TV}_i,
  \]
  using the upper-bound on bribe margin from Theorem~\ref{thm:noised-bribe-margin}. $\mathrm{TV}_i = 1 - e^{-\beta w_i/2}$ is the value of the total variation distance integral, with $\beta$ being the noise parameter for Laplace($1/\beta$) noise. 
\end{itemize}

Note the for corrected noised tally setting the bribe margin to an upper bound means we compute a lower bound on B-privacy by Theorem~\ref{thm:optimal-bcf}.

\paragraph{Fixed-point computation of equilibrium.}

Given a bribe vector $\mathbf{b}$ and bribe margins $\boldsymbol{\bm}=(\bm_i)$, Theorem~\ref{thm:adv_success_prob} implies that in equilibrium, the probability that voter $i$ votes for the adversary's preferred outcome is:
\[
p_i = \Phi\left(\frac{\bm_i b_i/\Delta_i - \mu_i}{\sigma}\right),
\]
where $\Phi$ is the standard normal CDF and $\Delta_i$ is voter $i$'s pivotality.

The pivotality vector $\boldsymbol{\Delta}=(\Delta_i)$ must satisfy the equilibrium condition: when all other voters play according to the probabilities $(p_j)_{j \neq i}$, voter $i$'s pivotality is:
\[
F_i(\boldsymbol{\Delta}) = \Pr\left[ \sum_{j\neq i} w_j X_j \in [W/2 - w_i, W/2) \right],  X_j \sim \mathrm{Bernoulli}(p_j).
\]

We compute $\boldsymbol{\Delta}$ as the fixed point $\boldsymbol{\Delta} = F(\boldsymbol{\Delta})$, which corresponds to the Bayesian Nash equilibrium condition that no voter wants to deviate given others' equilibrium behavior.

To evaluate the distribution $\sum_{j\neq i} w_j X_j$, we use Monte Carlo with common random numbers and antithetic variates to reduce variance. While costlier than Gaussian approximations, this avoids central limit theorem regularity requirements and yields stable accuracy even under extreme weight disparity.

We iterate $\boldsymbol{\Delta}^{(t+1)}=\cmepsilon F(\boldsymbol{\Delta}^{(t)})+(1-\cmepsilon)\boldsymbol{\Delta}^{(t)}$ with under-relaxation $\cmepsilon
=0.7$ until convergence.

\begin{figure*}[th!]
    \centering
    \includegraphics[width=\textwidth]{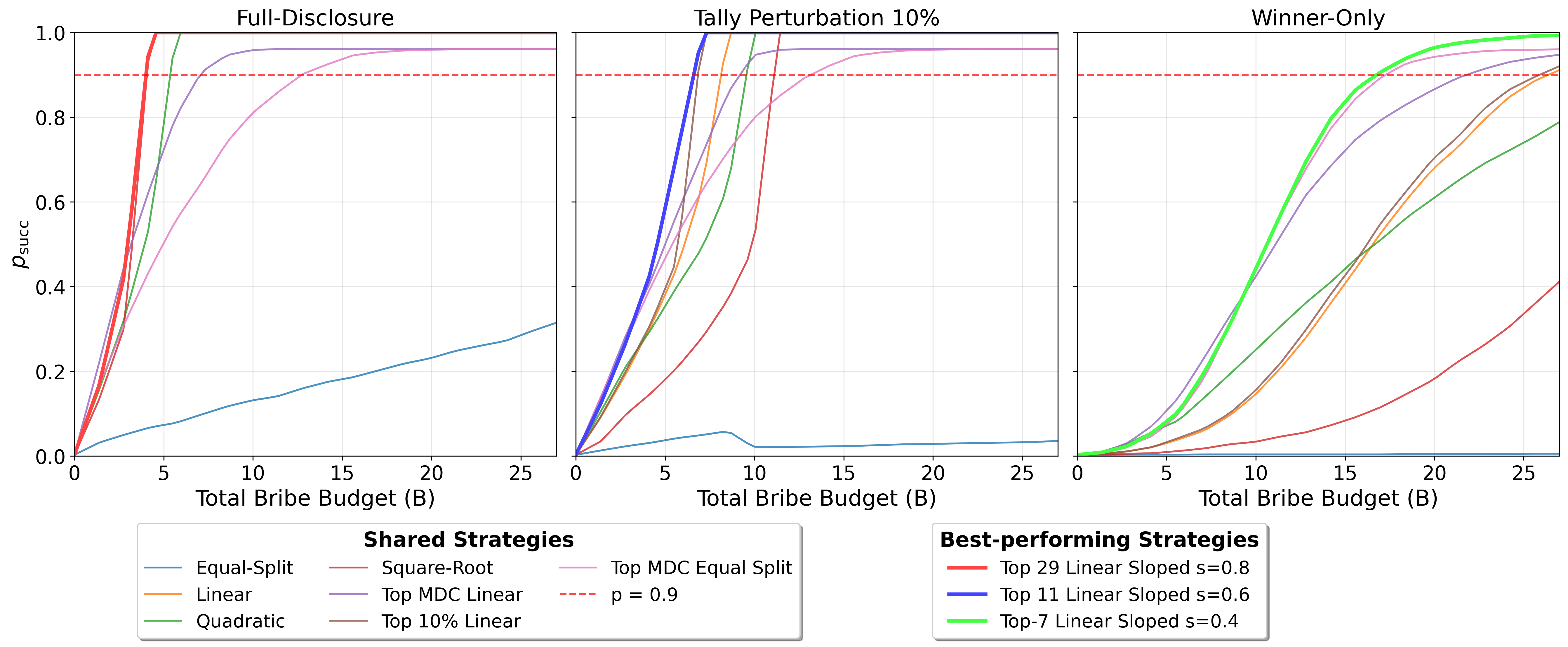}
    \caption{Adversarial success at equilibrium $p_{\mathsf{succ}}$ plotted against total bribe budget B for a range of bribe allocation strategies and tally algorithms for the same ApeCoin proposal from Figure~\ref{fig:Bribe_Distribution}. As expected as the tally algorithm releases less information all strategies are require a higher budget to achieve the same success (shift towards the right on each plot). The target success probability $p=0.9$ used in our experiments is represented by the dashed red line. For each tally algorithm we performed an exhaustive search over allocation strategies and parameters to identify those which required the minimum budget to achieve the target success probability, these curves are bold. Each of these best-performing strategies exhibits similar performance to at least one generic strategy from Table~\ref{tab:main_strategies}.} 
    \label{fig:budget_success_curves}
\end{figure*}

\paragraph{Heuristic bribe allocation strategies.} 
Given $(\mathbf{w}, \boldsymbol{\Delta}, \boldsymbol{\alpha}, B)$ we seek to allocate budget $B$ across voters to maximize the adversary's success probability. We initially attempted standard optimization methods but found that the non-convex nature of the problem caused these approaches to often fail or get trapped in local minima. Instead, we turned to exploration of heuristic bribe allocation strategies that focus on voters unlikely to support the adversary's preferred outcome (since bribing supporters would be wasteful).

Beyond choosing how to weight bribes among targeted voters, we also vary the number of voters to target. Let $k$ denote the number of opposing voters (ranked by weight) that receive bribes, with the remaining voters receiving no bribes. We conducted preliminary testing across a range of allocation strategies applied to varying values of $k$, ranging from the Minimum Decisive Coalition (MDC) size to the total number of opposing voters. Table~\ref{tab:experimental_alloc_func} summarizes the types of weighting strategies we explored across different values of $k$. Based on performance across a subset of proposals, we limited the strategies tested for our main experiments to those enumerated in Table~\ref{tab:main_strategies}.

\begin{table}[h]
\centering
\begin{tabular}{ll}
\toprule
\textbf{Allocation Strategy} & \textbf{Voters Targeted} \\
\midrule
\multirow{3}{*}{Linear} & All voters \\
 & 10 largest voters \\
 & Top 10\% of voters \\
 & Top 1\% of voters \\
\midrule
\multirow{2}{*}{Logarithmic} & Top MDC voters \\
 & Top 1\% of voters \\
\midrule
Square-Root & All voters \\
\midrule
Equal Split & Top MDC voters \\
\bottomrule
\end{tabular}
\caption{Allocation strategies used in main experiments. MDC (Minimum Decisive Coalition) is the smallest number of voters who could flip the outcome.}
\label{tab:main_strategies}
\end{table}


For each allocation strategy, we compute the resulting equilibrium and success probability, then select the strategy that maximizes the adversary's success probability for that budget level. Figure~\ref{fig:budget_success_curves} illustrates how the adversary's success probability varies as budget increases across different allocation strategies under full-disclosure, 10\% tally perturbation, and winner-only scenarios for a representative ApeCoin proposal (the same proposal used in Figure~\ref{fig:Bribe_Distribution}). For this figure, we conducted a more exhaustive enumeration over allocation strategies and values of $k$, and also include the best-performing strategy from this exhaustive search for each tally algorithm, demonstrating that its performance is similar to strategies used in the main experiments.


\paragraph{Success probability estimation.}

Given $(\mathbf{w},\boldsymbol{\Delta},\boldsymbol{\bm},\mathbf{b})$, we compute per-voter success probabilities via the equilibrium condition above and estimate $p_{{\sf succ}}$ by Monte Carlo with variance reduction techniques (common random numbers and antithetic variates). We use $R=1{,}000$ samples by default, increasing to $10{,}000$ if the estimate is within $0.01$ of the target probability $p$.

\paragraph{Summary of B-privacy computation.}
We compute $B^*$ by testing each allocation strategy listed in Table~\ref{tab:main_strategies} to find its minimum required budget, then selecting the best strategy. For each allocation strategy, we use binary search over budget levels to find the minimum budget $B$ that achieves the target success probability $p$. For a given budget level, we compute the resulting success probability via the following process:

\begin{enumerate}
\item Allocate bribes according to the strategy, focusing on voters opposing the adversary's preference
\item Compute the resulting equilibrium by iterating until pivotality convergence:
\begin{enumerate}
\item Recompute bribe margins $\boldsymbol{\alpha}$ if they depend on current pivotalities $\boldsymbol{\Delta}$
\item Update pivotalities based on voter choice probabilities
\end{enumerate}
\item Evaluate the resulting success probability $p_{\mathsf{succ}}$ via Monte Carlo
\end{enumerate}

We approximate the true optimal B-privacy as the minimum budget among all tested allocation strategies.

\end{document}